\newcommand{\RP}{r_P}
\newcommand{\RS}{r_S}
\newcommand{\RK}{r_K}
\newcommand{\mean}{\mathbb{E}}
\newcommand{\cov}{\mathbb{C}}
\newcommand{\var}{\mathbb{V}}
\newcommand{\define}{\triangleq}
\newcommand{\bias}{\mathrm{BIAS}}
\newcommand{\mse}{\mathrm{MSE}}
\newtheorem{Lem}{Lemma}
\newtheorem{Thm}{Theorem}
\newtheorem{Rmk}{Remark}
\newtheorem{Cly}{Corollary}
\newcommand{\corr}{\mathrm{corr}}
\newcommand{\ARE}{\mathrm{ARE}}
\newcommand{\sgn}{\mathrm{sgn}}
\begin{document}

\title{Comparison of Spearman's rho and Kendall's tau in Normal and Contaminated Normal Models}
\author{$\text{Weichao~Xu}^*$,~\IEEEmembership{Member,~IEEE,} Yunhe Hou,~\IEEEmembership{Member,~IEEE,} Y.~S.~Hung,~\IEEEmembership{Senior~Member,~IEEE},\\ and Yuexian Zou,~\IEEEmembership{Senior~Member,~IEEE}%
\thanks{This work was supported in part by the University of Hong Kong under Small Project Grant 200807176233 and Seed Funding Programme for 
Basic Research 201001159007.}%
\thanks{ W.~Xu, Y.~Hou and Y.~S.~Hung are with the Dept. of Electrical and Electronic
 Engineering, The University of Hong Kong, Pokfulam Road, Hong Kong, Hong Kong (e-mail:wcxu@eee.hku.hk; yhhou@eee.hku.hk; yshung@eee.hku.hk).}%
\thanks{ Y.~Zou is with the Advanced Digital Signal Processing Lab, Peking University Shenzhen Graduate School, Shenzhen, Guangdong 518055, P. R. China (e-mail:zouyx@szpku.edu.cn).}%
\thanks{${}^*$Corresponding Author. }
\thanks{Tel:+852-28578489 Fax:+852-25598738 (W. Xu, Y. Hou and Y. S. Hung). }
\thanks{Tel:+86-755-26032016 Fax:+86-755-26032016 (Y.~Zou). }%
}

\markboth{Manuscript submitted to IEEE Transactions on Information Theory}{Xu \MakeLowercase{\textit{et al.}}: Spearman's Rho or Kendall's Tau}

%
%
\maketitle


\begin{abstract}
This paper analyzes the performances of the Spearman's rho (SR) and Kendall's tau (KT)   
with respect to samples drawn from bivariate normal and bivariate contaminated normal populations.
The exact analytical formulae of the variance of SR and the covariance between SR and KT are obtained 
based on the Childs's reduction formula for the quadrivariate normal positive orthant probabilities. 
Close form expressions with respect to the expectations of SR and KT are established 
under the bivariate contaminated normal models.  The bias, mean square error (MSE) and asymptotic relative
efficiency (ARE) of the three estimators based on SR and KT to the Pearson's product moment
correlation coefficient (PPMCC) are investigated in both the normal and contaminated normal models. 
Theoretical and simulation results suggest that, contrary to the opinion 
of equivalence between SR and KT in some literature,
the behaviors of SR and KT are strikingly different in the aspects of bias effect, variance, mean square error, 
and asymptotic relative efficiency. 
The new findings revealed in this work provide not only deeper insights into the
two most widely used rank based correlation coefficients,
but also a guidance for choosing which one to use under the circumstances where the PPMCC fails to apply. 
\end{abstract}
\begin{keywords}
Bivariate normal, 
Correlation theory,
Contaminated normal model,
Kedall's tau (KT),
Orthant probability,
Pearson's product moment correlation coefficient (PPMCC),
Quadrivariate normal,
Spearman's rho (SR).
\end{keywords}
\section{Introduction} \label{sect:introduction}
\PARstart{C}{orrelation} analysis is  among the  core  research  paradigms  in  nearly  all  branches 
of scientific and engineering fields,  not to  mention the area of information theory%
~\cite{ruchkin1965,mccheng1968,chadwick1969,hansen1970,goldstein1973,bershad1974,Bae2006,johansen2008,johansen2009,gomadam2009}.
Being interpreted as the strength of statistical relationship between two random variables~\cite{kendall91}, 
correlation should be large and positive if  there is a high probability that
large (small) values of one variable occur in conjunction with large (small) values of another; 
and it should be large and negative if the direction is reversed~\cite{non_JDG92}. 
A number of methods have been proposed and applied in the literature 
to assess the correlation between two random variables.
Among these methods the Pearson's product moment correlation coefficient (PPMCC)~\cite{fisher21,fisher90},
Spearman's rho (SR)~\cite{kendall90} and Kendall's tau (KT)~\cite{kendall90} are perhaps the most
widely used~\cite{mk01}. 

The properties of PPMCC in bivariate normal samples (binormal model) is well known thanks to 
the creative work of Fisher~\cite{fisher21}. 
It follows that, in the normal cases, 
1) PPMCC is an asymptotic unbiased estimator of the population correlation $\rho$, and
2) the variance of PPMCC approaches the Cramer-Rao lower bound (CRLB) with increase of 
the sample size~\cite{kendall91}. Due to its optimality, PPMCC 
has and will continue to play the dominant role when quantifying the 
intensity of correlation between bivariate random variables in the literature. 
However, sometimes the PPMCC might not be applicable when the following scenarios happen: 
\begin{enumerate}
	\item The data is incomplete, that is, only ordinal information (e.g. ranks) is available.
		This situation is not uncommon in the area of social sciences, such as 
		psychology and education~\cite{kendall90};
	\item The underlying data is complete (cardinal) and follows a bivariate normal distribution, but is attenuated more or less by some 
		monotone nonlinearity in the transfer characteristics of sensors~\cite{tumanski2006};
		\item The data is complete and the majority follows a bivariate normal distribution, but there exists a tiny fraction of outliers (impulsive noise)~\cite{stein1995,wang2000it,reznic2002}. \label{enu:outlier}
\end{enumerate}
Under these circumstances, it would be more suitable to employ the two most popular nonparametric coefficients, SR and KT,
which are 1) dependant only on ranks, 2) invariant under increasing monotone transformations~\cite{kendall90},
and 3) robust against impulsive noise~\cite{georgy2002}.
Now we are at a stage to ask the question: which one, SR or KT, should we use in Scenarios 1) to 3)
where the familiar PPMCC is inapplicable? Unfortunately, however,
despite the rich history of SR and KT, the answers to this question are still inconsistent in the literature. 
Some researchers, such as Fieller \emph{et al}\cite{fhp1957}, preferred KT to SR based on empirical evidences; while others,
such as Gilpin~\cite{gilpin1993}, asserted that SR and KT are equivalent.  

Aiming at resolving such inconsistency, in this work we investigate systematically the properties of SR and KT
under the binormal model~\cite{lapidoth2010a,lapidoth2010b,bross2010}.  Moreover, to deal with Scenario~\ref{enu:outlier}) mentioned above, we also investigate their properties under the contaminated binormal model~\cite{stein1995,wang2000it,reznic2002}.
Our theoretical contribution is multifold. 
Firstly, we find a computationally more tractable formula of the variance of SR. Based on this formula, we
provide the densely tabulated Table~\ref{tab:tabO1O2} with high precision (ten decimal places). 
This table overcomes the shortcomings of the existing power-series-based approximations that are tedious to use and of rather
limited precision (up to five decimal places and for $\rho\leq0.8$ only)~\cite{kendall1949,dks1951,fhp1957,david1961}. 
Secondly, we derive the \emph{exact} analytical expression of the covariance between SR and KT.
With this new analytical result, we uncover a minor error in the literature~\cite{dks1951,kendall90}. 
Thirdly, we obtain the asymptotic expressions of the variances and hence the  asymptotic relative
efficiencies (AREs) concerning the three estimators of the population correlation $\rho$ based on SR and KT.
Finally, we find the asymptotic expressions with respect to the expectations of SR and KT under the contaminated normal model. 


The rest part of this paper is structured as follows. Section~\ref{sect:sectgeneral} 
gives some basic definitions and summarizes the general properties of PPMCC, SR and KT. In Section~\ref{sect:sectlemmas},
we lay the foundation of the theoretical framework in this study by outlining some critical results in the binormal model.  
Section~\ref{sect:sectmain}  establishes, in the bivariate normal model, 
1) the exact expression of the variance of SR,  
2) two exact expressions concerning the covariance between SR and KT, and, 
3) in the contaminated normal model, the closed form formulae associated with the expectations of SR and KT, respectively.  
In Section~\ref{sect:sectestimators} we focus on the performances of the three estimators of $\rho$ constructed from 
SR and KT.  Section~\ref{sect:sectnum} verifies the analytical results with Monte Carlo simulations.
Finally, in Section~\ref{sect:sectconclusion} we provide our answers to the
above raised question  concerning the choice of Spearman's rho and Kendall's tau in practice when PPMCC fails to apply. 
\section{Basic Definitions and General Properties} \label{sect:sectgeneral} 
\subsection{Definitions}
Let $\{(X_i ,Y_i)\}_{i=1}^n$ denote $n$ independent and identically distributed
(i.i.d.) data pairs drawn from a bivariate population with continuous joint
distribution.  Suppose that $X_j$ is at the $k$th position in the sorted sequence  $X_{(1)}<\cdots< X_{(n)}$. 
The number $k$ is termed the \emph{rank} of $X_j$ and is denoted by $P_j$. Similarly we can get the rank of $Y_j$ which is denoted by $Q_j$~\cite{kendall90}.
Let $\bar{X}$ and $\bar{Y}$ be the arithmetic mean values of $X_i$ and $Y_i$, respectively. Let $\sgn(\blacktriangle)$ stand for the sign of the argument $\blacktriangle$.  
The three well known classical correlation coefficient, PPMCC ($r_P$), SR ($r_S$), and KT ($r_K$), are then defined as follows~\cite{non_JDG92}:
\begin{align}
	\label{eq:rpdef}
	\RP(X,Y)&\define\frac{\sum\limits_{i=1}^n \left(X_i-\bar X\right)\left(Y_i-\bar Y\right)}{\left[\sum\limits_{i=1}^n\left(X_i-\bar X\right)^2\sum\limits_{i=1}^n\left(Y_i-\bar Y\right)^2\right]^\frac{1}{2}}\\
	\label{eq:rsdef}
  \RS(X,Y)&\define 1-\frac{6\sum\limits_{i=1}^n (P_i -Q_i )^2}{n(n^2-1)}\\
	\label{eq:rkdef}
	\RK(X,Y)&\define \frac{\underset{i\ne j=1}{\sum\limits^n\sum\limits^n}\,\,\sgn\left(X_i-X_j\right) \sgn\left(Y_i-Y_j\right)}{n(n-1)}. 
\end{align}
To ease the following discussion, we will employ the symbol $r_\lambda(X,Y)$, $\lambda\in\{P,S,K\}$ as a compact notation for
the three coefficients. For brevity, the arguments of $r_\lambda(X,Y)$ will be dropped in the sequel unless ambiguity occurs.
\subsection{General Properties}
It follows that coefficients $r_\lambda$, $\lambda\in\{P,S,K\}$ possess the following general properties:
\begin{enumerate}
\item $r_\lambda(X,Y) \in [-1,1]$ for all $(X,Y)$ (standardization);
\item $r_\lambda(X,Y)=r_\lambda(Y,X)$ (symmetry);
\item $r_\lambda=\pm1$ if $Y$ is a positive (negative) linear transformation of $X$ (shift and scale invariance);
\item $\RS{=}\RK{=}\pm1$ if $Y$ is a monotone increasing (decreasing) function of $X$ (monotone invariance);
\item The expectations of $r_\lambda$ equal zero if $X$ and $Y$ are independent (independence);
\item $r_\lambda(+,+)=-r_\lambda(-,+)=-r_\lambda(+,-)=r_\lambda(-,-)$;
\item $r_\lambda$ converges to normal distribution when the sample size $n$ is large.
\end{enumerate}
Note that the first six properties are discussed in~\cite{non_JDG92} and~\cite{mk01}, and the last property follows from 
the asymptotic theory of $U$-statistics established by Hoeffding~\cite{hoeffding1948}.
\subsection{Relationships Among PPMCC, SR and KT}
From their expressions (\ref{eq:rpdef})--(\ref{eq:rkdef}), it appears that the three coefficients PPMCC, SR and KT are quite different. 
However, as demonstrated below, these three coefficients are closely related with each other.
\subsubsection{Daniel's Generalized Correlation Coefficient}
Consider the $n$ data pairs $(X_i,Y_i)$, $i=1,\ldots,n$, at hand. To each pair of $X$'s, ($X_i, X_j$), 
we can allot a score $a_{ij}$ such that $a_{ij}=-a_{ji}$ and $a_{ii}=0$. 
In a similar manner, we can also allot a sore $b_{ij}$ to the ordered pair of $Y$'s, ($Y_i, Y_j$).
The Daniel's generalized coefficient $\Gamma$ is then defined by~\cite{daniels1944}
\begin{equation}
	\Gamma \define \frac{\sum\limits_{i=1}^n\sum\limits_{j=1}^n a_{ij} b_{ij}}{\left(\sum\limits_{i=1}^n\sum\limits_{j=1}^n a_{ij}^2 \sum\limits_{i=1}^n\sum\limits_{j=1}^n b_{ij}^2\right)^{\frac{1}{2}}}.
	\label{eq:gammadef}
\end{equation}
This general setup covers PPMCC, SR and KT as special cases with respect to different systems of scores~\cite{daniels1944}: 
\begin{itemize}
	\item  Replacing $a_{ij}$ by $X_j{-}X_i$ and $b_{ij}$ by $Y_j{-}Y_i$ in (\ref{eq:gammadef}) gives the PPMCC $r_P$ defined in (\ref{eq:rpdef});
	\item  Replacing $a_{ij}$ by $P_j{-}P_i$ and $b_{ij}$ by $Q_j{-}Q_i$ in (\ref{eq:gammadef}) gives the SR $r_S$ defined in (\ref{eq:rsdef});
	\item  Replacing $a_{ij}$ by $\sgn(X_j{-}X_i)$ and $b_{ij}$ by $\sgn(Y_j{-}Y_i)$ in (\ref{eq:gammadef}) gives the KT $r_K$ defined in (\ref{eq:rkdef}).
\end{itemize}
\subsubsection{Inequalities between SR and KT}
It is possible to state certain inequalities connecting the values of SR and KT based on a given set of $n$ observations.
The first one, ascribed to Daniel~\cite{daniels1950}, is 
\begin{equation}
	-1\leq \frac{3(n+2)}{n-2}\RK-\frac{2(n+1)}{n-2}\RS \leq 1	
	\label{eq:daniel}
\end{equation}
which, for large $n$, becomes 
\begin{equation*}
	-1 \leq 3\RK-2\RS \leq 1.
\end{equation*}
The second one, due to Durbin and Stuat~\cite{durbinstuart1951}, states that
\begin{equation}
	\RS \leq 1-\frac{1-\RK}{2(n+1)}\left[(n-1)(1-\RK)+4\right].
	\label{eq:durbin}
\end{equation}
Combing (\ref{eq:daniel}) and (\ref{eq:durbin}) and letting $n\to\infty$ yield the bounds of SR, in terms of KT, as
\begin{align*}
	\frac{3}{2}\RK-\frac{1}{2} &\leq\RS\leq \frac{1}{2} + \RK -\frac{1}{2}\RK^2, \quad \RK \geq 0 \\
	\frac{3}{2}\RK+\frac{1}{2} &\geq\RS\geq \frac{1}{2}\RK^2 + \RK -\frac{1}{2}, \quad \RK \leq 0. \\
\end{align*}
\subsubsection{Relationship of SR to Other Coefficients} Besides the PPMCC and KT, SR is also closely related to other 
correlation coefficients, e.g., the \emph{order statistics correlation coefficient} (OSCC)~\cite{xu2006,xu2007,xu2008}
and the \emph{Gini correlation} (GC)~\cite{gini87}. In fact, SR can be reduced from  
the OSCC and GC by replacing the variates with corresponding ranks~\cite{xu2010}.

\section{Auxiliary Results in Normal Cases} \label{sect:sectlemmas}
In this section we provide some prerequisites concerning the orthant probabilities of normal distributions. These probabilities, contained in
Lemma~\ref{lem:orthant}, are critical for the development of Theorem~\ref{thm:varrs} and Theorem~\ref{thm:covrsrk} later on.
Moreover, some well known results about the expectation and variance of PPMCC, SR and KT are collected in Lemma~\ref{lem:lemknown}
for ease of exposition.  For convenience, we use symbols $\mean(\blacktriangle)$, $\var(\blacktriangle)$,
$\cov(\blacktriangle,\blacklozenge)$, and $\corr(\blacktriangle,\blacklozenge)$ 
in the sequel to denote the mean, variance, covariance, and correlation of (between) random variables, respectively. 
Symbols of \emph{big oh} and \emph{little oh} are utilized to compare the magnitudes of two
functions $u(\blacktriangle)$ and $v(\blacktriangle)$ as the argument $\blacktriangle$ tends to a limit $L$ (might be infinite).
The notation $u(\blacktriangle)=O(v(\blacktriangle))$, $\blacktriangle {\to} L$, denotes that
$|u(\blacktriangle)/v(\blacktriangle)|$ remains bounded as $\blacktriangle {\to} L$; whereas the 
notation $u(\blacktriangle)=o(v(\blacktriangle))$, $\blacktriangle {\to} L$, denotes
that $u(\blacktriangle)/v(\blacktriangle){\to} 0$ as $\blacktriangle {\to} L$~\cite{Serfling2002}.  
Symbols of $P_m^0(Z_1,\ldots,Z_m)$ are adopted to denote the positive orthant probabilities 
associated with multivariate normal random vectors $[Z_1 \cdots Z_m]$  of dimensions $m=1,\ldots,4$, respectively. 
The notation $R(\varrho_{rs})_{m\times m}$ stands for correlation matrix with each element 
$\varrho_{rs}\define \corr(Z_r,Z_s)$,  $r,s=1,\ldots,m$. Obviously the diagonal entries in $R$ are all unities.
For compactness, we will also use the symbol $P_m^0(R)$ to denote $P_m^0(Z_1,\ldots,Z_m)$ in the sequel. 
\subsection{Orthant Probabilities for Normal Distributions}
\begin{Lem}
	\label{lem:orthant}
Assume that $Z_1$, $Z_2$, $Z_3$, $Z_4$ follow
a quadrivariate normal distribution with zero means and  correlation matrix  $R=\left(\varrho_{rs}\right)_{4\times 4}$. 
Define 
\begin{equation}
	\label{eq:Hdef}
	H(\blacktriangle)  \define 
	\begin{cases}
		1 &\quad (\blacktriangle > 0)\\
		0 &\quad (\blacktriangle \leq 0).
	\end{cases}
\end{equation}
Then the orthant probabilities
\begin{align}
	\label{eq:p1}
	P_1^0(Z_1) &\define \mean\left\{H(Z_1)\right\} \notag\\
	&=\frac{1}{2}\\
	P_2^0(Z_1,Z_2) &\define \mean\left\{H(Z_1)H(Z_2)\right\} \notag \\
	\label{eq:p2}
	&=\frac{1}{4}\left(1+\frac{2}{\pi}\sin^{-1}\varrho_{12}\right)\\
	P_3^0(Z_1,Z_2,Z_3) &\define \mean\left\{H(Z_1)H(Z_2)H(Z_3)\right\} \notag \\
	\label{eq:p3}
	&=\frac{1}{8}\left(1+\frac{2}{\pi}\sum_{r=1}^2\sum_{s=r+1}^3 \sin^{-1}\varrho_{rs}\right)\\
	P_4^0(Z_1,Z_2,Z_3,Z_4) &\define \mean\left\{H(Z_1)H(Z_2)H(Z_3)H(Z_4)\right\}\notag \\
	\label{eq:p4}
	&=\frac{1}{16}\left(1{+}\frac{2}{\pi}\sum_{r=1}^3\sum_{s=r+1}^4 \sin^{-1}\varrho_{rs}{+}W\right)
\end{align}
where 
\begin{align}
	\label{eq:lemWdef}
	W &{\define}\frac{1}{\pi^4}\overset{+\infty}{\iiiint\limits_{-\infty}}\frac{\exp\left(-\frac{1}{2} z R z^T \right)}{z_1 z_2 z_3 z_4} dz_1 dz_2 dz_3 dz_4\\
	\label{eq:lemWreduct}
	&{=}\sum_{\ell=2}^4\frac{4}{\pi^2}\int_0^1\hspace{-3pt}\frac{\varrho_{1\ell}}{\left[1{-}\varrho_{1\ell}^2 u^2\right]^{\frac{1}{2}}}\sin^{-1}\left[\frac{\alpha_{\ell}(u)}{\beta_{\ell}(u)\gamma_{\ell}(u)}\right] du 
\end{align}
with
\begin{align*}
	\alpha_2 & {=} \varrho_{34}{-}\varrho_{23}\varrho_{24}{-}[\varrho_{13}\varrho_{14}{+}\varrho_{12}(\varrho_{12}\varrho_{34}{-}\varrho_{14}\varrho_{23}{-}\varrho_{13}\varrho_{24})]u^2\\
	\alpha_3 & {=} \varrho_{24}{-}\varrho_{23}\varrho_{34}{-}[\varrho_{12}\varrho_{14}{+}\varrho_{13}(\varrho_{13}\varrho_{24}{-}\varrho_{14}\varrho_{23}{-}\varrho_{12}\varrho_{34})]u^2\\
	\alpha_4 & {=} \varrho_{23}{-}\varrho_{24}\varrho_{34}{-}[\varrho_{12}\varrho_{13}{+}\varrho_{14}(\varrho_{14}\varrho_{23}{-}\varrho_{13}\varrho_{24}{-}\varrho_{12}\varrho_{34})]u^2\\
	\beta_2  & {=}\beta_3{=}\left[1{-}\varrho_{23}^2{-}(\varrho_{12}^2{+}\varrho_{13}^2{-}2\varrho_{12}\varrho_{13}\varrho_{23})u^2\right]^{\frac{1}{2}}\\
	\gamma_2 & {=} \beta_4{=}\left[1{-}\varrho_{24}^2{-}(\varrho_{12}^2{+}\varrho_{14}^2{-}2\varrho_{12}\varrho_{14}\varrho_{24})u^2\right]^{\frac{1}{2}}\\
	\gamma_3 & {=}\gamma_4{=} \left[1{-}\varrho_{34}^2{-}(\varrho_{13}^2{+}\varrho_{14}^2{-}2\varrho_{13}\varrho_{14}\varrho_{34})u^2\right]^{\frac{1}{2}}.
\end{align*}
\end{Lem}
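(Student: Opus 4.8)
The plan is to obtain all four identities from one mechanism: differentiate the orthant integral with respect to a single correlation, then integrate back from the independent baseline. Write $\phi_m(z;R)$ for the zero-mean $m$-variate normal density with correlation matrix $R$. The engine is Plackett's identity $\partial\phi_m/\partial\varrho_{rs}=\partial^2\phi_m/(\partial z_r\,\partial z_s)$ for $r\neq s$, which I would verify by differentiating the Gaussian exponent directly. Equation (\ref{eq:p1}) is immediate: by symmetry of the univariate normal, $\mean\{H(Z_1)\}=\prob(Z_1>0)=\tfrac12$.

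For the recursive step I would differentiate $P_m^0(R)=\int_{(0,\infty)^m}\phi_m(z;R)\,dz$ in $\varrho_{rs}$, substitute Plackett's identity, and integrate out $z_r$ and $z_s$ by the fundamental theorem of calculus. Because $\phi_m$ vanishes at infinity, what survives is the density restricted to the face $z_r=z_s=0$ and integrated over the remaining positive orthant. Factoring that restricted density into the bivariate marginal of $(Z_r,Z_s)$ at the origin times the conditional law of the rest yields Plackett's reduction
\[
\frac{\partial P_m^0}{\partial\varrho_{rs}}=\frac{1}{2\pi\sqrt{1-\varrho_{rs}^2}}\,P_{m-2}^0\!\left(\tilde R_{rs}\right),
\]
with $\tilde R_{rs}$ the partial-correlation matrix of the remaining variables given $Z_r=Z_s=0$. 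Taking $m=2$ (the right factor is the empty product $1$) and integrating from $\varrho_{12}=0$, where $P_2^0=\tfrac14$, delivers (\ref{eq:p2}); taking $m=3$, where the right factor is the constant $P_1^0=\tfrac12$, and integrating each pairwise arcsine contribution delivers (\ref{eq:p3}). Neither case carries a residual term, precisely because the reduced factor is free of correlation dependence.

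The work lies in (\ref{eq:p4}). I would integrate from the baseline $R^{(0)}$ with $\varrho_{12}=\varrho_{13}=\varrho_{14}=0$, at which $Z_1$ is independent of $(Z_2,Z_3,Z_4)$ and hence $P_4^0(R^{(0)})=\tfrac12 P_3^0(\varrho_{23},\varrho_{24},\varrho_{34})$ supplies the three arcsines among $\{23,24,34\}$ with no correction. Switching on the index-$1$ correlations along $\varrho_{1\ell}(u)=\varrho_{1\ell}u$, $u:0\to1$, the reduction above gives
\[
\frac{\partial P_4^0}{\partial\varrho_{1\ell}}=\frac{1}{2\pi\sqrt{1-\varrho_{1\ell}^2u^2}}\left(\frac14+\frac{1}{2\pi}\sin^{-1}\tilde\varrho_\ell(u)\right),
\]
where $\tilde\varrho_\ell(u)$ is the partial correlation of the pair complementary to $\{1,\ell\}$. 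Integrating $\tfrac{d}{du}P_4^0$ over $u\in(0,1)$ and summing on $\ell=2,3,4$, the $\tfrac14$ piece produces the remaining three arcsines $\sin^{-1}\varrho_{1\ell}$, completing the sum over all six pairs, while the $\tfrac{1}{2\pi}\sin^{-1}\tilde\varrho_\ell$ piece is exactly the residual $W$ written in the single-integral form (\ref{eq:lemWreduct}).

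The main obstacle is the final identification, which is almost entirely algebraic. Writing the $2\times2$ conditioning inverse explicitly and clearing the common factor $1-\varrho_{1\ell}^2u^2$ from the numerator and denominator of each partial correlation, I would check that $\tilde\varrho_\ell(u)=\alpha_\ell(u)/[\beta_\ell(u)\gamma_\ell(u)]$, with $\alpha_\ell$ the normalized conditional covariance and $\beta_\ell,\gamma_\ell$ the normalized conditional standard deviations of the complementary pair, matching the displayed expressions term by term. Carrying this bookkeeping through the three index pairs, and reconciling the resulting single integral with the quadruple-integral definition (\ref{eq:lemWdef}) of $W$ by means of Childs's reduction formula, is the delicate and error-prone part; once $\tilde\varrho_\ell=\alpha_\ell/(\beta_\ell\gamma_\ell)$ is confirmed, summing the three contributions closes (\ref{eq:p4}).
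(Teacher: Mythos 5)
Your proposal is correct, but it takes a genuinely different route from the paper. The paper proves this lemma entirely by citation: (\ref{eq:p1}) is declared trivial, (\ref{eq:p2}) is attributed to Stieltjes/Sheppard, (\ref{eq:p3}) is obtained from Gupta's inclusion--exclusion identity $P_3^0=\tfrac12\bigl[1-\sum_r P_1^0(Z_r)+\sum_{r<s}P_2^0(Z_r,Z_s)\bigr]$, and (\ref{eq:p4}) is simply referred to Childs's 1967 paper. You instead derive all four statements from a single mechanism --- Plackett's identity $\partial\phi_m/\partial\varrho_{rs}=\partial^2\phi_m/(\partial z_r\,\partial z_s)$ plus path integration from an independence baseline --- and your algebra checks out: clearing the factor $1-\varrho_{12}^2u^2$ from the conditional covariance and variances of $(Z_3,Z_4)$ given $Z_1=Z_2=0$ does reproduce $\alpha_2$, $\beta_2^2$, $\gamma_2^2$ exactly as displayed, so $\tilde\varrho_\ell=\alpha_\ell/(\beta_\ell\gamma_\ell)$ and the residual single integral matches (\ref{eq:lemWreduct}) term by term, including the $\tfrac{1}{16}\cdot\tfrac{2}{\pi}\sin^{-1}\varrho_{1\ell}$ contributions from the constant $\tfrac14$ in $P_2^0$. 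What your approach buys is a self-contained, unified derivation that also explains the otherwise opaque $\alpha_\ell,\beta_\ell,\gamma_\ell$ as normalized conditional moments; what it does not deliver is the equality of that single-integral residual with the quadruple-integral \emph{definition} (\ref{eq:lemWdef}) of $W$ (the tetrachoric-expansion form), which you rightly flag but then defer back to Childs --- slightly circular, since that identification is precisely what Childs's reduction formula asserts. This is a cosmetic rather than substantive gap here, because everything downstream in the paper (the $W_\xi$ terms in Theorems~\ref{thm:varrs} and~\ref{thm:covrsrk}) is computed from the single-integral form (\ref{eq:lemWreduct}), not from (\ref{eq:lemWdef}).
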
 
\begin{proof}
	The first statement (\ref{eq:p1}) is trivial. The second one (\ref{eq:p2}) is usually called \emph{Sheppard's theorem}
	in the literature, although it was proposed earlier by Stieltjes~\cite{kendall94}. The third one (\ref{eq:p3}) is a simple 
	generalization of Sheppard's theorem based on the relationship~\cite{gupta1963}
	\[
	P_3^0 = \frac{1}{2}\left[1-\sum_{r=1}^3 P_1^0(Z_r) +\sum_{r=1}^2\sum_{s=r+1}^3 P_2^0(Z_r,Z_s)\right].
	\]
	The last one (\ref{eq:p4}) is due to Childs~\cite{childs1967} and is termed the \emph{Childs's reduction formula}
	throughout. 
\end{proof}

\subsection{Some Well Known Results}
\begin{Lem} \label{lem:lemknown}
Let $\{(X_i ,Y_i)\}_{i=1}^n$ denote $n$ i.i.d. bivariate normal data pairs with correlation coefficient $\rho$.
Let $\RP$, $\RS$ and $\RK$ be the PPMCC, SR and KT that defined in (\ref{eq:rpdef})--(\ref{eq:rkdef}), respectively.
Write $S_1\define \sin^{-1}\rho$ and $S_2\define \sin^{-1}\frac{1}{2}\rho$.
Then
\begin{align}
	\label{eq:lemmeanrp}
	\mean(\RP) &= \rho\left[1-\frac{1-\rho^2}{2n}+O\left(n^{-2}\right)\right]\to\rho\text{ as }n\to\infty\\
	\label{eq:lemvarrp}
	\var(\RP)  &= \frac{(1-\rho^2)^2}{n-1}+O\left(n^{-2}\right)\\
	\label{eq:lemmeanrs}
	\mean(\RS) &= \frac{6}{\pi(n+1)}\left[\sin^{-1}\rho+(n-2)\sin^{-1}\frac{\rho}{2}\right]\\
	\label{eq:lemmeanrsasymp}
	&\to\frac{6}{\pi}\sin^{-1}\frac{\rho}{2} \text{ as }n\to\infty\\
	\label{eq:lemmeanrk}
	\mean(\RK) &= \frac{2}{\pi}\sin^{-1}\rho\\
	\label{eq:lemvarrk}
	\var(\RK)  &= \frac{2}{n(n-1)}\left[1{-}\frac{4S_1^2}{\pi^2}{+}2(n{-}2)\left(\frac{1}{9}{-}\frac{4S_2^2}{\pi^2}\right)\right].
\end{align}
\end{Lem}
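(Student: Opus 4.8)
The plan is to separate the two classical PPMCC identities from the four rank-based ones. The moments of $\RP$ in (\ref{eq:lemmeanrp})--(\ref{eq:lemvarrp}) are Fisher's, and I would obtain them either by citing the exact sampling distribution of the product-moment correlation in the binormal model, or by a delta-method expansion of $\RP=s_{XY}/\sqrt{s_{XX}s_{YY}}$ about the population second moments, retaining terms through $O(n^{-1})$. The remaining four statements I would prove from scratch, and the common engine is Lemma~\ref{lem:orthant}: each is the expectation of a product of sign functions of Gaussian differences, which collapses onto the orthant probabilities (\ref{eq:p2})--(\ref{eq:p4}).

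For $\mean(\RK)$ I would use that, by (\ref{eq:rkdef}), $\RK$ is a degree-two $U$-statistic with kernel $h=\sgn(X_1-X_2)\sgn(Y_1-Y_2)$, so $\mean(\RK)=\mean(h)$. Writing $U=X_1-X_2$, $V=Y_1-Y_2$, the pair $(U,V)$ is bivariate normal with $\corr(U,V)=\rho$ (cross terms vanish by independence of the data pairs), so $\mean(h)=2\prob(UV>0)-1=4P_2^0(U,V)-1$, and Sheppard's formula (\ref{eq:p2}) delivers (\ref{eq:lemmeanrk}). For $\mean(\RS)$ the starting point is the centred-rank sign representation $P_i-\tfrac{n+1}{2}=\tfrac12\sum_{j\ne i}\sgn(X_i-X_j)$ and likewise for $Q_i$. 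Substituting into the Pearson-of-ranks form of (\ref{eq:rsdef}) gives
\[
\RS=\frac{3}{n(n^2-1)}\sum_{i}\sum_{j\ne i}\sum_{k\ne i}\sgn(X_i-X_j)\sgn(Y_i-Y_k).
\]
I would split the triple sum by whether $j=k$. The $n(n-1)$ two-point terms reduce, as in the KT case, to $\tfrac{2}{\pi}\sin^{-1}\rho$; the $n(n-1)(n-2)$ three-point terms reduce to $\mean[\sgn(X_1-X_2)\sgn(Y_1-Y_3)]$, where now $\corr(X_1-X_2,Y_1-Y_3)=\rho/2$, so (\ref{eq:p2}) yields $\tfrac{2}{\pi}\sin^{-1}\tfrac{\rho}{2}$. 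Collecting the counts and simplifying the prefactor produces (\ref{eq:lemmeanrs}), and (\ref{eq:lemmeanrsasymp}) follows because the three-point contribution dominates.

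For $\var(\RK)$ I would invoke the Hoeffding variance formula $\var(\RK)=\binom{n}{2}^{-1}[\,2(n-2)\zeta_1+\zeta_2\,]$, with $\zeta_2=\var(h)$ and $\zeta_1=\var\big(\mean[h\mid X_1,Y_1]\big)$. Since $h=\pm1$, the term $\zeta_2=1-4S_1^2/\pi^2$ is immediate from (\ref{eq:lemmeanrk}). The component $\zeta_1$ is the crux: setting $h_1(X_1,Y_1)=\mean[h\mid X_1,Y_1]$, conditional independence gives $\zeta_1=\mean(h_1^2)-4S_1^2/\pi^2$ with $\mean(h_1^2)=\mean[\sgn(X_1-X_2)\sgn(Y_1-Y_2)\sgn(X_1-X_3)\sgn(Y_1-Y_3)]$, a four-fold sign expectation over the three-point difference vector $(X_1-X_2,Y_1-Y_2,X_1-X_3,Y_1-Y_3)$, whose correlation matrix is the Kronecker product $\left(\begin{smallmatrix}1&1/2\\1/2&1\end{smallmatrix}\right)\otimes\left(\begin{smallmatrix}1&\rho\\\rho&1\end{smallmatrix}\right)$. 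Expanding each $\sgn$ as $2H-1$ and taking expectations turns this into the alternating sum $16P_4^0-8\sum P_3^0+4\sum P_2^0-2\sum P_1^0+1$; on substituting (\ref{eq:p1})--(\ref{eq:p4}), every constant and every first-order $\sin^{-1}$ contribution cancels, leaving exactly the fourth-order term $W$ of (\ref{eq:p4}).

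The hard part will be the final step: evaluating $W$ through the one-dimensional Childs integral (\ref{eq:lemWreduct}) at this particular correlation matrix and showing it collapses to $\tfrac19+\tfrac{4}{\pi^2}(S_1^2-S_2^2)$, so that $\zeta_1=\tfrac19-4S_2^2/\pi^2$ and the assembled variance matches (\ref{eq:lemvarrk}). Everything else is bookkeeping plus Sheppard's theorem, but this integral is where the constant $\tfrac19$ and the $\sin^{-1}(\rho/2)$-squared term are genuinely produced. To control it I would exploit the Kronecker structure of the matrix to simplify the integrand $\alpha_\ell/(\beta_\ell\gamma_\ell)$, and I would check the result against the anchor $\rho=0$, where $\mean(h_1^2)=1/9$ follows directly from the uniformity of $\Phi(X)$, as a consistency test before pushing the general evaluation through.
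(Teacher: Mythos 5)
Your proposal does something the paper itself never does: the paper's ``proof'' of Lemma~\ref{lem:lemknown} is a pure attribution, citing Hotelling for (\ref{eq:lemmeanrp}), Fisher for (\ref{eq:lemvarrp}), Moran for (\ref{eq:lemmeanrs}), and Esscher for (\ref{eq:lemmeanrk}) and (\ref{eq:lemvarrk}). You instead derive four of the six identities from Lemma~\ref{lem:orthant}, which is precisely the machinery the paper reserves for the harder Theorems~\ref{thm:varrs} and \ref{thm:covrsrk} in Appendices~\ref{app:appvarrs} and \ref{app:appcovrsrk}; the remark following Theorem~\ref{thm:covrsrk} even observes that this technique yields an alternative proof of (\ref{eq:lemvarrk}) via $\var(\mathcal{T})$, so your route is squarely in the spirit of the paper's own toolkit. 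The parts you actually execute are correct: the difference pair $(X_1-X_2,Y_1-Y_2)$ has correlation $\rho$ and the three-point pair $(X_1-X_2,Y_1-Y_3)$ has correlation $\rho/2$, the counts $n(n-1)$ versus $n(n-1)(n-2)$ and the prefactor $3/[n(n^2-1)]$ reproduce (\ref{eq:lemmeanrs}) exactly, the Hoeffding decomposition gives $\zeta_2=1-4S_1^2/\pi^2$ immediately, and your identification of $\mean(h_1^2)$ with the orthant functional of the Kronecker-structured correlation matrix, together with the cancellation leaving only the $W$-term of (\ref{eq:p4}), is sound.

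The one genuine gap is the step you flag yourself but do not carry out: the closed-form evaluation of that $W$. Consistency with Esscher's formula forces $\mean(h_1^2)=\tfrac{1}{9}+\tfrac{4}{\pi^2}\left(S_1^2-S_2^2\right)$, and your anchors at $\rho=0$ and $\rho=\pm1$ confirm this value at three points, but an anchor check is not a derivation --- and deducing the value from (\ref{eq:lemvarrk}) itself would be circular. Note that closed-form collapse is emphatically \emph{not} a generic property of these quadrivariate orthant integrals: the structurally similar $W_c$, $W_g$, $W_h$ in Table~\ref{tab:orthant} admit no elementary form, which is the entire reason Theorems~\ref{thm:varrs} and \ref{thm:covrsrk} require numerical tables. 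So the special (Kronecker) structure of your matrix must be exploited explicitly in the Childs integrand, or the probability $\prob\left(X_1>X_2,\,Y_1>Y_2,\,X_1>X_3,\,Y_1>Y_3\right)$ must be computed by some other direct argument, before (\ref{eq:lemvarrk}) is proved rather than merely reduced. Until then, the honest options are the paper's own: cite Esscher, or push the $\var(\mathcal{T})$ computation of Appendix~\ref{app:appcovrsrk} through to the end.
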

\begin{proof}
	The first three results, (\ref{eq:lemmeanrp})--(\ref{eq:lemmeanrs}), were given by Hotelling~\cite{hotelling1953},
	Fisher~\cite{fisher90}, and Moran~\cite{moran1948}, respectively; whereas the last two results,
	(\ref{eq:lemmeanrk}) and (\ref{eq:lemvarrk}), were derived by Esscher \cite{esscher24}. 
\end{proof}
\section{Main Results in Normal and Contaminated Normal Models} \label{sect:sectmain}
In this section we establish our main results concerning $\var(\RS)$ and $\cov(\RS,\RK)$ in
the normal model as well as $\mean(\RS)$ and $\mean(\RK)$ in the  contaminated normal model.
We start from revisiting $\var(\RS)$ in normal samples. Being the most challenging part and of fundamental importance
for further development, the new discovery on  $\var(\RS)$ deserves to be formulated as a theorem. 
\subsection{Variance of Spearman's rho} \label{sect:sectvarrs}
\begin{Thm} \label{thm:varrs}
	Let $\{(X_i ,Y_i)\}_{i=1}^n$, $S_1$ and $S_2$ be defined as in Lemma~\ref{lem:lemknown}. 
	Write $\xi\in\{c,d,e,f,g,h,l,m,n,o,p,q\}$.
	Let $W_\xi$ be defined as in (\ref{eq:lemWdef}) with respect to $R_\xi$ that tabulated in Table~\ref{tab:orthant}.
	Then the variance of $\RS(X,Y)$ is 
\begin{equation}
	\begin{split}
		&\hspace{-6pt}\var(\RS){=}\frac{6}{n(n{+}1)}{+}\frac{9(n{-}2)(n{-}3)}{n(n^2{-}1)(n{+}1)}\bigg[(n{-}4)\Omega_1(\rho){+}\Omega_2(\rho)\bigg]\\
	&\quad{-}\frac{36}{\pi^2n(n^2{-}1)(n{+}1)}\bigg[3(n{-}2)(3n^2{-}15n{+}22)S_2^2\\
	&\hspace{4cm}{+}12(n{-}2)^2S_1S_2{-}2(n{-}3)S_1^2\bigg]
\end{split}
\label{eq:thmvarrs}
\end{equation}
where
\begin{align}
	\label{eq:thmO1}
	\Omega_1(\rho) 
	    & = W_c+8W_d+2W_f \\
	\label{eq:thmO2}
	   \Omega_2(\rho)
	&= 6W_g+8W_h+6W_l+2W_n+W_o+\frac{1}{3}.
\end{align}
Moreover, when $n$ is sufficiently large,
\begin{equation}
	\var(\RS)\simeq\frac{1}{n}\left[9\Omega_1(\rho)-\frac{324 S_2^2}{\pi^2}\right].
	\label{eq:thmvarrsasymp}
\end{equation}
\end{Thm}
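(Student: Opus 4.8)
The plan is to turn $\RS$ into a normalized multiple sum of Heaviside functions of the normal variates, so that $\var(\RS)$ becomes a linear combination of the normal orthant probabilities evaluated in Lemma~\ref{lem:orthant}. Because the ranks $P_i$ and $Q_i$ are permutations of $1,\ldots,n$, the sums $\sum_i(P_i-\bar P)^2$ and $\sum_i(Q_i-\bar Q)^2$ equal the deterministic constant $n(n^2-1)/12$; hence the definition (\ref{eq:rsdef}) can be recast in product-moment form as $\RS=\frac{12}{n(n^2-1)}\sum_i P_iQ_i-\frac{3(n+1)}{n-1}$. Writing each rank through the Heaviside function $H$ of (\ref{eq:Hdef}) as $P_i=1+\sum_{j\ne i}H(X_i-X_j)$ and $Q_i=1+\sum_{k\ne i}H(Y_i-Y_k)$ and expanding, the leading piece of $\sum_iP_iQ_i$ is the triple sum $\sum_i\sum_{j\ne i}\sum_{k\ne i}H(X_i-X_j)H(Y_i-Y_k)$. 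Thus, up to explicit constants, $\RS$ is a normalized triple sum of products of two Heaviside functions of differences of the normal variates.

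To compute $\var(\RS)=\mean(\RS^2)-[\mean(\RS)]^2$ I would square this representation. The square of the triple sum is a sextuple sum over six indices, each summand being the expectation of a product of up to four Heaviside functions of differences such as $X_i-X_j$, $Y_i-Y_k$, $X_a-X_b$ and $Y_a-Y_c$. Since the pairs $(X_i,Y_i)$ are jointly normal and independent across $i$, every such difference is a zero-mean normal variate, and each expectation is therefore exactly a positive orthant probability $P_m^0$ with $m\le4$, whose correlation matrix has entries drawn from $\{0,\pm\tfrac12,\pm\rho,\pm\tfrac{\rho}{2}\}$; for instance $\corr(X_i-X_j,X_i-X_k)=\tfrac12$, $\corr(X_i-X_j,Y_i-Y_j)=\rho$ and $\corr(X_i-X_j,Y_i-Y_k)=\tfrac{\rho}{2}$. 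Both the dimension $m$ and the matrix itself are dictated solely by the pattern of coincidences among the six indices.

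The heart of the argument is then to partition the sextuple sum according to these index-coincidence patterns. Each admissible pattern contributes a combinatorial weight --- the number of index tuples realizing it, a polynomial in $n$ --- multiplied by a single orthant probability, and I would catalogue the distinct patterns together with the correlation matrices $R_\xi$, $\xi\in\{c,\ldots,q\}$, that they induce, exactly as in Table~\ref{tab:orthant}. The low-dimensional cases ($m\le3$) are evaluated directly from Sheppard's theorem (\ref{eq:p2}) and its generalization (\ref{eq:p3}), and produce the elementary terms in $S_1$, $S_2$ and the rational constant (note $\tfrac{2}{\pi}\sin^{-1}\tfrac12=\tfrac13$, whence the $\tfrac13$ in $\Omega_2$); the genuinely four-dimensional cases are reduced by the Childs formula (\ref{eq:p4})--(\ref{eq:lemWreduct}) to the integrals $W_\xi$, which are collected into $\Omega_1(\rho)$ and $\Omega_2(\rho)$ as in (\ref{eq:thmO1})--(\ref{eq:thmO2}). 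Subtracting $[\mean(\RS)]^2$ with $\mean(\RS)$ taken from (\ref{eq:lemmeanrs}) and gathering like powers of $n$ then yields the exact expression (\ref{eq:thmvarrs}).

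The asymptotic form (\ref{eq:thmvarrsasymp}) follows by extracting the leading $O(n^{-1})$ coefficient from (\ref{eq:thmvarrs}): the $\Omega_1$ group contributes $9\Omega_1(\rho)/n$, the $S_2^2$ group contributes $-324S_2^2/(\pi^2 n)$, and every remaining term is $O(n^{-2})$. I expect the main obstacle to be the combinatorial bookkeeping of the second and third steps: correctly enumerating all index-coincidence patterns, computing each multiplicity as a polynomial in $n$, and --- most delicately --- identifying the exact correlation matrix $R_\xi$ each pattern induces and matching it to the correct entry of Table~\ref{tab:orthant}. Any slip in these correlations feeds straight into the $W_\xi$ terms and hence into $\Omega_1$ and $\Omega_2$, so this accounting must be carried out with great care.
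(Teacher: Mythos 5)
Your proposal follows essentially the same route as the paper's own proof in Appendix~\ref{app:appvarrs}: Moran's Heaviside representation of the ranks, reduction of $\var(\RS)$ to $\var(\mathcal{S})$ for the triple sum $\mathcal{S}=\sum_{i,j,k}H(X_i-X_j)H(Y_i-Y_k)$, classification of the resulting sextuple sum by index-coincidence patterns into orthant probabilities evaluated via Sheppard's theorem and Childs's reduction formula (the paper's Table~\ref{tab:orthant}, after David and Mallows), and extraction of the leading $O(n^{-1})$ term for the asymptotic form. The combinatorial bookkeeping you flag as the main obstacle is exactly what the paper delegates to Table~\ref{tab:orthant} together with the identities $W_e=2W_d$, $W_g=W_p$, $W_h=W_q$, $W_m=2W_l+\frac{1}{3}$ used to compress the $24$ terms into (\ref{eq:thmO1})--(\ref{eq:thmO2}).
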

\begin{proof}
	See Appendix~\ref{app:appvarrs}.
\end{proof}
\begin{Rmk}
	Unlike the Taylor-expansion-based approximate formulae in the literature~\cite{kendall1949,dks1951,fhp1957,david1961},
	the expression (\ref{eq:thmvarrs}) 
	in Theorem~\ref{thm:varrs} is \emph{exact} for both the sample size $n\ge 4$ and the population correlation $\rho\in[-1,1]$. 
	However, due to the complicated integrals involved in the expressions of $W$-terms in $\Omega_1(\rho)$ and $\Omega_2(\rho)$,
	the variance of $\RS$ cannot be expressed into elementary functions in general.  In other words,
	we need to conduct numerical integrations based on Childs's reduction formula (\ref{eq:lemWreduct}) so as to calculate $\Omega_1(\rho)$ and $\Omega_2(\rho)$ and
	hence $\var(\RS)$ from (\ref{eq:thmvarrs}). Nevertheless, exact results can be obtained for some particular cases.
	It can be shown that (Appendix~\ref{app:appO1O2O3}) 
	\begin{align}
		\label{eq:Omega120}
		\Omega_1(0)&=\frac{1}{9},\quad \Omega_2(0)=\frac{5}{9},\\
		\label{eq:Omega121}
		\Omega_1(1)&=1,\quad\,           \Omega_2(1)=\frac{16}{3}.
	\end{align}
	Substituting $\rho=0$ and (\ref{eq:Omega120}) into (\ref{eq:thmvarrs}) leads directly to
	\begin{equation}
		\label{eq:varrsrho0}
		\var(\RS)\big|_{\rho=0}=\frac{1}{n-1}	
	\end{equation}
	which is a well known result~\cite{kendall90}. Substituting $\rho=1$ and (\ref{eq:Omega121}) into (\ref{eq:thmvarrs}) 
	and (\ref{eq:thmvarrsasymp}) together with some simplifications yields
	\begin{equation}
		\label{eq:varrsrho1}
		\var(\RS)\big|_{\rho=1}=0	
	\end{equation}
	which is of no surprise but, to our knowledge, has never been proven explictly in the literature
	(although indirect arguments can be found~\cite{xu2010}). Note that $\var(\RS)$ also vanishes for $\rho=-1$ due to 
	symmetry.
\end{Rmk}
\subsection{Covariance between Spearman's rho and Kendall's tau} \label{sect:sectcovrsrk}
Besides the variance of SR just established in Theorem~\ref{thm:varrs}, the covariance between SR and KT
is also indispensable for revealing the basic properties of the estimators to be discussed in Section~\ref{sect:sectestimators}.
\begin{Thm} \label{thm:covrsrk}
	Let $\{(X_i ,Y_i)\}_{i=1}^n$, $S_1$ and $S_2$ be defined as in Lemma~\ref{lem:lemknown}. 
	Then the covariance between $\RS(X,Y)$ and $\RK(X,Y)$ is 
  \begin{align}
		  \hspace{-6pt}\cov(\RS,\RK)&{=}\frac{12}{n(n^2{-}1)}\Bigg[\frac{7n{-}5}{18}{+}(n{-}4)\frac{S_1^2}{\pi^2}{-}5(n{-}2)\frac{S_2^2}{\pi^2}\notag\\
	  	\label{eq:thmcovrsrk}
		  &\hspace{1cm}{-}6(n{-}2)^2\frac{S_1 S_2}{\pi^2}{+}(n{-}2)(n{-}3)\Omega_3(\rho)\Bigg]\\
	  	\label{eq:thmcovrsrkasymp}
		&{\simeq}\frac{12}{n}\left[\Omega_3(\rho){-}6\frac{S_1 S_2}{\pi^2}\right] \text{ (as } n \text{ large)} 
  \end{align}
where 
\begin{equation}
	\Omega_3(\rho)=\frac{1}{2}W_g+W_h.
	\label{eq:thmcovstatement}
\end{equation}
\end{Thm}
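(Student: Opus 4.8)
The plan is to exploit the common sign-function structure underlying both coefficients, in the spirit of the Daniel's generalized-coefficient viewpoint of (\ref{eq:gammadef}), and to reduce the computation of $\cov(\RS,\RK)=\mean(\RS\RK)-\mean(\RS)\mean(\RK)$ to the normal orthant probabilities of Lemma~\ref{lem:orthant}. The two marginal means are already supplied by (\ref{eq:lemmeanrs}) and (\ref{eq:lemmeanrk}), so the entire effort concentrates on the cross-moment $\mean(\RS\RK)$.

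First I would rewrite both estimators in terms of sign functions of pairwise differences. For Kendall's tau this is immediate from (\ref{eq:rkdef}). For Spearman's rho I would use the rank identity $P_i-\tfrac{n+1}{2}=\tfrac12\sum_{j\ne i}\sgn(X_i-X_j)$ (and likewise for $Q_i$) together with the fact that the rank denominator is the constant $\tfrac{1}{12}n(n^2-1)$, which recasts (\ref{eq:rsdef}) as
\begin{equation*}
	\RS=\frac{3}{n(n^2-1)}\sum_{i}\sum_{j\ne i}\sum_{k\ne i}\sgn(X_i-X_j)\,\sgn(Y_i-Y_k).
\end{equation*}
Multiplying this triple sum by the pair sum for $\RK$ and taking expectation term by term then expresses $\mean(\RS\RK)$ as a weighted sum, over the index quintuple $(i,j,k,a,b)$, of the four-fold sign moments $\mean\!\left[\sgn(X_i{-}X_j)\sgn(Y_i{-}Y_k)\sgn(X_a{-}X_b)\sgn(Y_a{-}Y_b)\right]$.

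Next I would evaluate these sign moments through Lemma~\ref{lem:orthant}. Since the four arguments are linear combinations of jointly normal, zero-mean variables, they are themselves quadrivariate normal with zero means; writing $\sgn(\blacktriangle)=2H(\blacktriangle)-1$ and expanding the product turns each moment into an alternating sum of the orthant probabilities $P_m^0$, $m\le 4$, of (\ref{eq:p1})--(\ref{eq:p4}). In particular the fully four-dimensional contribution carries the integral $W$ of (\ref{eq:lemWdef}), so the genuinely quadrivariate configurations are precisely where the $W$-terms enter. Because disjoint groups of observations are independent and $\mean[\sgn(\cdot)]=0$ for a single symmetric normal, only those quintuples whose four sign-arguments are mutually linked contribute.

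The combinatorial heart of the argument, which I expect to be the main obstacle, is the classification of the quintuple $(i,j,k,a,b)$ by its pattern of coinciding indices. For each admissible coincidence pattern I would first determine the correlation matrix of the four difference variables --- these are exactly the matrices $R_\xi$ tabulated in Table~\ref{tab:orthant} and already used in Theorem~\ref{thm:varrs} --- and then count the multiplicity of the pattern as a polynomial in $n$. Collecting terms, the lower-dimensional configurations produce the rational constant $\tfrac{7n-5}{18}$ together with the $S_1^2$, $S_2^2$ and $S_1S_2$ contributions of (\ref{eq:thmcovrsrk}), while only the two quadrivariate configurations $R_g$ and $R_h$ survive, with combinatorial factor $(n-2)(n-3)$ and weights that assemble into $\Omega_3(\rho)=\tfrac12 W_g+W_h$. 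Finally I would subtract $\mean(\RS)\mean(\RK)$ formed from (\ref{eq:lemmeanrs}) and (\ref{eq:lemmeanrk}); the leading $O(1)$ pieces cancel, leaving the $O(n^{-1})$ covariance in the closed form (\ref{eq:thmcovrsrk}), and retaining only the dominant $(n-2)(n-3)$ and $(n-2)^2$ contributions yields the asymptotic form (\ref{eq:thmcovrsrkasymp}). The delicate points throughout are the exhaustive and non-redundant enumeration of patterns, the correct bookkeeping of their $n$-dependent multiplicities, and the application of Childs's reduction formula (\ref{eq:lemWreduct}) to the two surviving quadrivariate matrices.
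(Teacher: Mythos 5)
Your proposal is correct and follows essentially the same route as the paper's Appendix~C: both express the two statistics as sums over index tuples of products of step/sign functions of pairwise differences (the paper via $H(\cdot)$ and the decomposition $\mathcal{S}=I+J$, $\mathcal{T}=4I-2K-2L+n^{[2]}$, you via $\sgn(\cdot)=2H(\cdot)-1$ directly), classify the quintuples by index-coincidence pattern, evaluate each pattern through the orthant probabilities of Lemma~\ref{lem:orthant} and the matrices of Table~\ref{tab:orthant}, and collect terms before subtracting the product of means. The only cosmetic difference is that the paper first obtains $\Omega_3=\tfrac14 W_g+\tfrac14 W_p+\tfrac12 W_h+\tfrac12 W_q$ over four quadrivariate configurations and then collapses it to $\tfrac12 W_g+W_h$ via the identities $W_g=W_p$, $W_h=W_q$ in (\ref{eq:Widentities}).
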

\begin{proof}
	See Appendix~\ref{app:appcovrsrk}.
\end{proof}
\begin{Rmk}
	The technique employed in  Appendix~\ref{app:appcovrsrk} can also provide an alternative proof of $\var(\RK)$ in (\ref{eq:lemvarrk}),
	by the relationship 
	\[
	\var({\RK})=\frac{1}{n^2(n-1)^2}\var(\mathcal{T})=\frac{1}{n^2(n-1)^2}\left[\mean(\mathcal{T}^2)-\mean^2(\mathcal{T})\right].
	\]
	The interested reader, after trying this, will find that the proof by this way is much simpler than the
	characteristic-function-based argument detailed in~\cite{kendall90}.
\end{Rmk}
\begin{Cly} \label{cor:covrsrk}
In Theorem~\ref{thm:covrsrk}, the covariance $\cov(\RS,\RK)$ can also be expressed as
 \begin{align}
	 \hspace{-6pt}\cov(\RS,\RK)&{=}\frac{12}{n(n^2{-}1)}\Bigg[\frac{(n{+}1)^2}{18}{+}(n{-}4)\frac{S_1^2}{\pi^2}{-}5(n{-}2)\frac{S_2^2}{\pi^2}\notag\\
	  	\label{eq:corcovrsrk}
		&\hspace{6.5mm}{-}6(n{-}2)^2\frac{S_1 S_2}{\pi^2}{+}\frac{2}{\pi^2}(n{-}2)(n{-}3)\Omega_4(\rho)\Bigg]\\
	  	\label{eq:corcovrsrkasymp}
		&{\simeq}\frac{12}{n}\left[\frac{1}{18}{+}2\frac{\Omega_4(\rho)}{\pi^2}{-}6\frac{S_1 S_2}{\pi^2}\right] \text{ (as } n \text{ large)} 
  \end{align}
  where
\begin{equation}
	\begin{split}
		\Omega_4(\rho)&=\int_0^\rho\left[\sin^{-1}\left(\frac{x}{3}\right)+2\sin^{-1}\left(\frac{x}{\sqrt{3}}\right)\right]\frac{dx}{\sqrt{1-x^2}}\\
		&{-}2\int_0^\rho\sin^{-1}\left(\frac{x}{2}\sqrt{\frac{1-x^2}{9-3x^2}}\right)\frac{dx}{\sqrt{4-x^2}}\\
		&{+}\int_0^\rho\sin^{-1}\left(\frac{x}{2}\frac{5-x^2}{3-x^2}\right)\frac{dx}{\sqrt{4-x^2}}\\
		&{-}2\int_0^\rho\sin^{-1}\left(x\sqrt{\frac{1-x^2}{12-6x^2}}\right)\frac{dx}{\sqrt{4-x^2}}\\
		&{+}2\int_0^\rho\sin^{-1}\left(x\sqrt{\frac{3-x^2}{4-2x^2}}\right)\frac{dx}{\sqrt{4-x^2}}.
	\end{split}
	\label{eq:corO4}
\end{equation}
\end{Cly}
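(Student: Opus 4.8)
The plan is to read Corollary~\ref{cor:covrsrk} as a purely algebraic recasting of Theorem~\ref{thm:covrsrk}, not as a fresh computation. Setting (\ref{eq:corcovrsrk}) against (\ref{eq:thmcovrsrk}), the two bracketed expressions share the identical $S_1^2$, $S_2^2$ and $S_1S_2$ contributions and differ only in the constant term, where $\frac{(n+1)^2}{18}$ replaces $\frac{7n-5}{18}$, and in the last term, where $\frac{2}{\pi^2}(n-2)(n-3)\Omega_4(\rho)$ replaces $(n-2)(n-3)\Omega_3(\rho)$. Because $\frac{(n+1)^2}{18}-\frac{7n-5}{18}=\frac{(n-2)(n-3)}{18}$, dividing the discrepancy by $(n-2)(n-3)$ shows that the two forms agree for every admissible $n$ if and only if
\begin{equation}
	\Omega_3(\rho)=\frac{1}{18}+\frac{2}{\pi^2}\Omega_4(\rho).
	\label{eq:plan-star}
\end{equation}
I would carry out this cancellation first, so that the entire corollary is reduced to the single identity (\ref{eq:plan-star}).

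To prove (\ref{eq:plan-star}) I would use the fundamental theorem of calculus in $\rho$. Since the integral in (\ref{eq:corO4}) vanishes at $\rho=0$, the required boundary value is $\Omega_3(0)=\frac{1}{18}$, which I would verify either by evaluating $\Omega_3=\tfrac12 W_g+W_h$ in (\ref{eq:thmcovstatement}) at $\rho=0$, where only the $\rho$-independent entries of $R_g$ and $R_h$ persist, or, more cheaply, by computing $\cov(\RS,\RK)$ from the independence of the rank vectors at $\rho=0$ and reading the constant off (\ref{eq:thmcovrsrk}). It then suffices to match derivatives, that is, to show $\Omega_3'(\rho)=\frac{2}{\pi^2}\Omega_4'(\rho)$, where by (\ref{eq:corO4}) the right-hand side is simply $\frac{2}{\pi^2}$ times the integrand of $\Omega_4$ evaluated at $x=\rho$.

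The derivative of $\Omega_3=\tfrac12 W_g+W_h$ I would compute from the orthant-probability origin of $W$ in (\ref{eq:p4}) rather than from the Childs integral. Writing $W=16P_4^0-1-\frac{2}{\pi}\sum_{r<s}\sin^{-1}\varrho_{rs}$ and invoking the classical differential identity for normal orthant probabilities, namely that $\partial P_4^0/\partial\varrho_{rs}$ equals the bivariate density $\frac{1}{2\pi\sqrt{1-\varrho_{rs}^2}}$ at the origin times the conditional orthant probability of the complementary pair given $Z_r=Z_s=0$, and evaluating the latter as $\frac14(1+\frac{2}{\pi}\sin^{-1}\varrho^\ast_{rs})$ through (\ref{eq:p2}), the elementary terms cancel and leave
\begin{equation}
	\frac{\partial W}{\partial\varrho_{rs}}=\frac{4}{\pi^2\sqrt{1-\varrho_{rs}^2}}\,\sin^{-1}\varrho^\ast_{rs},
	\label{eq:plan-plackett}
\end{equation}
where $\varrho^\ast_{rs}$ is the partial correlation of the two variables complementary to $r,s$. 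Differentiating through the entries of $R_g$ and $R_h$ by the chain rule, every $\rho$-independent entry drops out because $d\varrho_{rs}/d\rho=0$; an entry equal to $\rho$ contributes a factor $1/\sqrt{1-\rho^2}$, while an entry equal to $\rho/2$ contributes $\tfrac12(1-\rho^2/4)^{-1/2}=1/\sqrt{4-\rho^2}$. These are precisely the two denominators appearing in (\ref{eq:corO4}), and the partial correlations $\varrho^\ast_{rs}$ supply its $\sin^{-1}$ arguments.

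The remaining step is bookkeeping: assemble $\tfrac12\,dW_g/d\rho+dW_h/d\rho$, recognize the weighted sum of the terms $\sin^{-1}(\varrho^\ast_{rs})/\sqrt{1-\varrho_{rs}^2}$ as $\frac{2}{\pi^2}$ times the integrand of (\ref{eq:corO4}), and integrate from $0$ to $\rho$ using $\Omega_3(0)=\frac{1}{18}$ to obtain (\ref{eq:plan-star}). The principal obstacle is exactly this matching. One must read the explicit entries of $R_g$ and $R_h$ from Table~\ref{tab:orthant}, compute for each $\rho$-dependent pair the partial correlation $\varrho^\ast_{rs}$ from the $4\times4$ conditioning (cofactor) formula, and verify that the simplified arguments coincide term-by-term with $\sin^{-1}(x/3)$, $\sin^{-1}(x/\sqrt3)$, $\sin^{-1}\!\big(\tfrac{x}{2}\sqrt{\tfrac{1-x^2}{9-3x^2}}\big)$, and the remaining two, carrying the correct coefficients $1,2,-2,+1,-2,+2$ induced by the weights $\tfrac12$ and $1$. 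The passage from the range $[0,1]$ in the Childs representation (\ref{eq:lemWreduct}) to the range $[0,\rho]$ in (\ref{eq:corO4}) is made transparent by the substitution $x=\rho u$, under which $\varrho_{1\ell}(1-\varrho_{1\ell}^2u^2)^{-1/2}\,du$ becomes $dx/\sqrt{1-x^2}$ or $dx/\sqrt{4-x^2}$ according as the anchor correlation equals $\rho$ or $\rho/2$; I would use this substitution on the $\rho$-dependent terms as an independent check that no $\sin^{-1}$ argument has been mismatched.
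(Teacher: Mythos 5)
Your proposal is correct, and its skeleton matches the paper's: both arguments reduce the corollary to the single identity $\Omega_3(\rho)=\frac{1}{18}+\frac{2}{\pi^2}\Omega_4(\rho)$ and then substitute it into (\ref{eq:thmcovrsrk}) and (\ref{eq:thmcovrsrkasymp}) (your observation that $\frac{(n+1)^2}{18}-\frac{7n-5}{18}=\frac{(n-2)(n-3)}{18}$ is exactly what makes that substitution close). Where you part ways is in how the identity is established. The paper inverts (\ref{eq:p4}) to write $W_g$ and $W_h$ in terms of $P_4^0(R_g)$ and $P_4^0(R_h)$ and then simply cites the single-integral representations of those orthant probabilities tabulated in Appendix 2 of David and Mallows; you instead reconstruct those integrals from scratch via Plackett's differential identity $\partial P_4^0/\partial\varrho_{rs}=\frac{1}{2\pi\sqrt{1-\varrho_{rs}^2}}\,P_2^0(\cdot\mid Z_r=Z_s=0)$ together with Sheppard's theorem for the conditional pair, and then integrate back from $\Omega_4(0)=0$ and $\Omega_3(0)=\frac{1}{18}$. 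Your intermediate formula $\partial W/\partial\varrho_{rs}=\frac{4}{\pi^2\sqrt{1-\varrho_{rs}^2}}\sin^{-1}\varrho^\ast_{rs}$ checks out, as do the two Jacobian factors $1/\sqrt{1-\rho^2}$ and $1/\sqrt{4-\rho^2}$ for entries equal to $\rho$ and $\rho/2$. What your route buys is self-containedness: the reader need not trust a fifty-year-old table of integrals, and the same machinery re-derives the Childs-type representations used elsewhere in the paper. What it costs is the partial-correlation bookkeeping for every $\rho$-dependent entry of $R_g$ and $R_h$, which you correctly identify as the real labor and which the paper avoids entirely by citation; until that term-by-term matching against the five integrands of (\ref{eq:corO4}) is actually carried out, your argument is a complete plan rather than a finished proof, but there is no gap in the method.
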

\begin{proof}
Inverting (\ref{eq:p4}) yields
\begin{equation}
	W=16P_4^0-1-\frac{2}{\pi}\sum_{r=1}^{3}\sum_{s=r+1}^4\sin^{-1}\varrho_{rs}.
	\label{eq:WandP4}
\end{equation}
Combining (\ref{eq:thmcovstatement}) and (\ref{eq:WandP4}), $\Omega_3(\rho)$ can be rewritten in terms of $P_4^0$ and the correlation coefficients corresponding 
to $R_g$ and $R_h$ in Appendix 2 of~\cite{david1961}. This leads to 
\begin{equation}
	\Omega_3(\rho)=\frac{1}{18}+\frac{2}{\pi^2}\Omega_4(\rho).
	\label{eq:O4O3}
\end{equation}
The corollary thus follows directly by substituting (\ref{eq:O4O3}) to (\ref{eq:thmcovrsrk}) and (\ref{eq:thmcovrsrkasymp}), respectively. 
\end{proof}
\begin{Rmk}
	Both (\ref{eq:thmcovrsrk}) and (\ref{eq:corcovrsrk}) are \emph{exact} for any value of $n \ge 4$ and
	$|\rho|\le 1$. However, they are of different usefulness according to different numerical and analytical purposes.
	Formula (\ref{eq:thmcovrsrk}) is more convenient in the sence of
	controlling the precision of numerical integrations when programming; whereas (\ref{eq:corcovrsrk})
	is more convenient in the sence of evaluating any order ($\ge1$) of derivatives of $\cov(\RS,\RK)$ with respect to $\rho$.
	These higher order derivatives are mandatory when expanding $\cov(\RS,\RK)$ as a power series in $\rho$, 
	a conventional practice in the literature. For example, performing the Taylor expansion to 
	(\ref{eq:corcovrsrkasymp}) with the assistance of (\ref{eq:corO4}) gives 
	\begin{align}
		&\cov(\RS,\RK) \simeq \frac{2}{3n}\big(1-1.24858961\rho^2+0.06830496\rho^4\notag\\
		 \label{eq:covrsrkexp}
		 &+0.07280482\rho^6 + 0.04025528 \rho^8+0.02189277 \rho^{10}+\cdots\big)
	\end{align}
	which agrees with the formula (51) obtained in~\cite{dks1951}, except for the coefficients of the 
	last two terms, which we find to be $0.04025528$ and $0.02189277$, against their $0.04025526$
	and $0.01641362$, respectively. Since $\Omega_4(\rho)$ in (\ref{eq:corcovrsrk}) is exact , we believe that (\ref{eq:covrsrkexp}) 
	is more accurate than (51) in~\cite{dks1951}.  Unfortunately, even (\ref{eq:covrsrkexp}) is too coarse
	when $n$ is small and/or $|\rho|$ is large. To satisfy the requirments of the current study,
	we prefer to the $\Omega_3(\rho)$-based formula (\ref{eq:thmcovrsrk}), which can provide numerical results
	to any desired decimal place. For convenience of us as well as other researchers, a densely tabulated table, Table~\ref{tab:tabO3},  
	for $\Omega_3(\rho)$ with ten-place accuracy  is provided in Section~\ref{sect:sectnum}.  
\end{Rmk}
\begin{Rmk}
	Due to the complicated integrals involved in (\ref{eq:thmcovrsrk}) and (\ref{eq:corcovrsrk}), $\cov(\RS,\RK)$ cannot be expressed 
	in elementary functions. However, exact results are attainable for $\rho=0$ and $\rho=1$ ($-1$).
	It follows that (Appendix~\ref{app:appO1O2O3})
	\begin{align}
		\label{eq:O3rho0}
		\Omega_3(0)&=\frac{1}{18}\\
		\label{eq:O3rho1}
		\Omega_3(1)&=\frac{1}{2}.
	\end{align}
	Substituting (\ref{eq:O3rho0}) into (\ref{eq:thmcovrsrk}) yields
	\begin{equation}
		\cov(\RS,\RK)\big|_{\rho=0}=\frac{2}{3n}\frac{n+1}{n-1}
		\label{eq:covrsrkrho0}
	\end{equation}
	which is more readily to obtain on substitution of $\rho=0$ into (\ref{eq:corcovrsrk}).
	Regarding the case for $\rho=1$, it is rather difficult by means of substituting $\rho=1$ into (\ref{eq:corcovrsrk}) and 
	evaluating $\Omega_4(1)$ based on (\ref{eq:corO4}) thereafter. Fortunately, with the help of (\ref{eq:O3rho1}), it follows
	readily from (\ref{eq:thmcovrsrk}) and (\ref{eq:thmcovrsrkasymp}) that
	$	\cov(\RS,\RK)\big|_{\rho=1} = 0$
	which, again, is of no surprise but, to our knowledge, has never beed explictly proven in the literature.
	Due to symmetry, we also have   
	$	\cov(\RS,\RK)\big|_{\rho=-1} = 0.$
\end{Rmk}
\subsection{$\mean(\RS)$ and $\mean(\RK)$ in Contaminated Normal Model}
The PPMCC is notoriously sensitive to the non-Gaussianity caused by impulsive contamination in the data. 
Even a single outlier can distort severely the value of PPMCC and hence result in misleading inference 
in practice.  Assume that $(X,Y)$ obeys the following distribution~\cite{georgy2002}
\begin{equation}
	\bar{\epsilon}\mathcal{N}\left(\mu_X,\mu_Y,\sigma_X^2,\sigma_Y^2,\rho\right)+\epsilon\mathcal{N}\left(\mu_X,\mu_Y,\lambda_X^2\sigma_X^2,\lambda_Y^2\sigma_Y^2,\rho'\right)	
	\label{eq:gmm}
\end{equation}
where $\bar{\epsilon}\define 1-\epsilon$, $0\le\epsilon\le 1$, $\lambda_X\gg1$, and $\lambda_Y\gg1$. 
Under this  Gaussian contamination model that frequently used in the literature of
robustness analysis~\cite{stein1995,wang2000it,reznic2002}, it has been shown that,
no matter how small $\epsilon$ is, the expectation of the PPMCC $\mean(\RP)\to\rho'$ as 
$\lambda_X\to\infty$ and $\lambda_Y\to\infty$~\cite{georgy2002}. On the other hand, 
as shown in the theorem below, SR and KT are more robust than PPMCC under the model (\ref{eq:gmm}).

\begin{Thm} \label{thm:thmrobust}
	Let $\{(X_i,Y_i)\}_{i=1}^n$ be i.i.d. samples generated from the model (\ref{eq:gmm}).  
	Let $\RS$ and $\RK$ be the SR and KT defined in (\ref{eq:rsdef}) and (\ref{eq:rkdef}), respectively.
Then
\begin{align}
	\label{eq:thmgmmmeanrk}
	\lim_{\substack{\epsilon\to 0\\\lambda_X\to\infty\\\lambda_Y\to\infty}}\mean(\RK)&=\frac{2}{\pi}\left[(1-2\epsilon)\sin^{-1}\rho+2\epsilon\sin^{-1}\rho'\right]\\
	\label{eq:thmgmmmeanrs}
	\lim_{\substack{\epsilon\to 0\\ n\to\infty\\\lambda_X\to\infty\\\lambda_Y\to\infty}}\mean(\RS)&=\frac{6}{\pi}\left[(1-3\epsilon)\sin^{-1}\frac{\rho}{2}+\epsilon\sin^{-1}\rho'\right].
\end{align}
\end{Thm}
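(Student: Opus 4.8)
The plan is to express both expectations as expectations of products of sign functions taken over a fixed, small number of sample points, and then to evaluate these under the mixture (\ref{eq:gmm}) by conditioning on which component produced each participating point. For Kendall's tau the reduction is immediate from (\ref{eq:rkdef}): all summands are identically distributed, so
\[
\mean(\RK)=\mean\!\left[\sgn(X_1-X_2)\,\sgn(Y_1-Y_2)\right],
\]
which is independent of $n$ (explaining why the $n\to\infty$ limit is absent from (\ref{eq:thmgmmmeanrk})). For Spearman's rho I would first rewrite each rank through the sign identity $P_i-\tfrac{n+1}{2}=\tfrac12\sum_{j\ne i}\sgn(X_i-X_j)$ and the analogue for $Q_i$, so that (\ref{eq:rsdef}) becomes a normalized triple sum over $(i,j,k)$. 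Separating the terms with two distinct indices from those with three distinct indices gives
\[
\mean(\RS)=\frac{3}{n+1}\big[\tau+(n-2)\psi\big],
\]
where $\tau\define\mean[\sgn(X_1-X_2)\sgn(Y_1-Y_2)]$ and $\psi\define\mean[\sgn(X_1-X_2)\sgn(Y_1-Y_3)]$. This reproduces (\ref{eq:lemmeanrs}) in the pure normal case and, since $\tau$ and $\psi$ are bounded, shows that $\mean(\RS)\to3\psi$ as $n\to\infty$. Everything therefore reduces to evaluating the two-point functional $\tau$ and the three-point functional $\psi$ under (\ref{eq:gmm}).

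To evaluate $\tau$ and $\psi$ I would take the limits $\lambda_X,\lambda_Y\to\infty$ first, conditioning each point on being clean (probability $1-\epsilon$) or contaminated (probability $\epsilon$). Since $\sgn(\cdot)$ is scale invariant and the sign product is bounded by unity, bounded convergence lets me move the limit inside the expectation; the pointwise almost-sure limit is dictated by the contaminated coordinate, whose standard deviation diverges. Concretely, if point $2$ is contaminated and point $1$ clean then $\sgn(X_1-X_2)\to-\sgn(X_2-\mu_X)$, a function of point $2$ alone. The decisive consequence is a decoupling: when the diverging point feeds into only one of the two differences, that factor becomes symmetric about zero and, in the limit, statistically independent of the other factor (even of the clean point it originally shared), so its conditional expectation vanishes. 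When instead both differences are driven by the same contaminated bivariate vector---because the contaminated point is common to both differences---the two signs inherit the correlation $\rho'$ and Sheppard's theorem (\ref{eq:p2}) gives $\tfrac{2}{\pi}\sin^{-1}\rho'$; likewise the all-clean configurations yield $\tfrac{2}{\pi}\sin^{-1}\rho$ for a pair and $\tfrac{2}{\pi}\sin^{-1}\tfrac{\rho}{2}$ for a triple.

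It then remains to weight each configuration by its mixture probability and keep the leading order as $\epsilon\to0$. In the pairwise functional $\tau$ every point contributes to both differences, so both single-contamination patterns survive with value $\tfrac{2}{\pi}\sin^{-1}\rho'$; with clean-clean weight $(1-\epsilon)^2\simeq1-2\epsilon$ and combined single-contamination weight $\simeq2\epsilon$, this produces (\ref{eq:thmgmmmeanrk}). In the triple functional $\psi$ only the \emph{hub} point $1$ feeds into both differences, whereas points $2$ and $3$ are \emph{spokes} feeding a single difference each; hence of the three single-contamination patterns only the one contaminating point $1$ survives (value $\tfrac{2}{\pi}\sin^{-1}\rho'$), the other two vanishing by the decoupling above. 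With all-clean weight $(1-\epsilon)^3\simeq1-3\epsilon$ and surviving single-contamination weight $\simeq\epsilon$, the factor $3\psi$ yields (\ref{eq:thmgmmmeanrs}). This hub/spoke asymmetry is precisely why the contaminant enters $\mean(\RK)$ with coefficient $2\epsilon$ but $\mean(\RS)$ with coefficient $\epsilon$.

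I expect the main obstacle to be the limit step: rigorously justifying the interchange of $\lambda_X,\lambda_Y\to\infty$ with the expectation and, above all, identifying the correct almost-sure limiting sign of each difference together with the emergent independence that kills the mixed spoke terms. This is the delicate point, since it is exactly where a careless heuristic would attach the wrong correlation ($\rho$ versus $\rho'$, or a nonzero value instead of $0$) to a mixed configuration, and it is the very mechanism that makes SR and KT respond differently to contamination.
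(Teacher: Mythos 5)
Your proposal is correct and follows essentially the same route as the paper's proof in Appendix~\ref{app:appgmm}: both reduce $\mean(\RK)$ and $\mean(\RS)$ to the two- and three-point functionals $\mean[H(X_1-X_2)H(Y_1-Y_2)]$ and $\mean[H(X_1-X_2)H(Y_1-Y_3)]$, condition on which mixture component produced each point, and apply Sheppard's theorem (\ref{eq:p2}) configuration by configuration, with the hub/spoke asymmetry giving the $2\epsilon$ versus $\epsilon$ coefficients. The only thing you missed is that the limit interchange you flag as delicate is actually unnecessary: for each fixed assignment of components the normalized differences are \emph{exactly} bivariate normal at finite $\lambda_X,\lambda_Y$, with explicit correlations such as $(\rho+\lambda_X\lambda_Y\rho')/\sqrt{(1+\lambda_X^2)(1+\lambda_Y^2)}$, so Sheppard applies exactly and the limits reduce to continuity of $\sin^{-1}$.
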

\begin{proof}
	See Appendix~\ref{app:appgmm}.
\end{proof}

\begin{Rmk}
	It was stated without substantial argument in~\cite{georgy2002} that, under the model (\ref{eq:gmm}), $\mean(\RS)$ is of the following form
	\begin{equation}
		\mean(\RS)=\frac{6}{\pi}\left[(1-\epsilon)\sin^{-1}\frac{\rho}{2}+\epsilon\sin^{-1}\frac{\rho'}{2}\right] \tag{$\ast$}
	\end{equation} 
as $\epsilon\to0$, $\lambda_X\to\infty$ and $\lambda_Y\to\infty$.
This is quite inconsistent with our result (\ref{eq:thmgmmmeanrs}) in Theorem~\ref{thm:thmrobust}. 
We will resolve the controversy between  (\ref{eq:thmgmmmeanrs}) and ($\ast$) by Monte Carlo simulations in Section~\ref{sect:sectnum}. 
\end{Rmk}

\section{Estimators of the Population Correlation} \label{sect:sectestimators}
In this section, we investigate the performance of the estimators of $\rho$
based on SR and KT in terms of bias, MSE and ARE to PPMCC. 
To gain further insight into their relationship,
the correlation between the two estimators $\hat\rho_S$ and $\hat\rho_K$ (defined below) is also derived. 

\subsection{Asymptotic Unbiased Estimators} 
Inverting (\ref{eq:lemmeanrp}), (\ref{eq:lemmeanrsasymp}) and (\ref{eq:lemmeanrk}), we have the following estimators of $\rho$ 
\begin{align}
	\label{eq:estp}
	\hat\rho_P & \define \RP \\
	\label{eq:ests}
	\hat\rho_S &\define 2 \sin\left(\frac{\pi}{6}\RS\right)\\
	\label{eq:estk}
	\hat\rho_K &\define  \sin\left(\frac{\pi}{2}\RK\right).
\end{align}
Moreover, another estimator based on a mixture of $\RS$ and $\RP$ can be constructed as~\cite{kendall90}
\begin{equation}
	\hat\rho_M \define 2\sin\left(\frac{\pi}{6}\RS-\frac{\pi}{2}\frac{\RK-\RS}{n-2}\right)
	\label{eq:estm}
\end{equation}
based on the following relationship
\begin{equation}
	\mean(\RS)=\frac{6}{\pi}\left(S_2+\frac{S_1-3S_2}{n+1}\right).
	\label{eq:eqrelation}
\end{equation}

In the sequel we will focus on the properties of the estimators defined in (\ref{eq:estp})--(\ref{eq:estm}).
Here the estimator $\hat\rho_P$ in (\ref{eq:estp}) is employed as a benchmark due to its optimality for normal samples, 
in the sense of approaching the CRLB~\cite{kendall91} when the sample size is sufficiently large.
\subsection{Bias Effect for Small Samples}
It is noteworthy that the four estimators in (\ref{eq:estp})--(\ref{eq:estm}) are unbiased only for large samples.
When the sample size is small, the bias effects, as shown in the following theorem, are not ignorable any more.
\begin{Thm} \label{thm:bias}
	Let $\hat\rho_\zeta$, $\zeta\in\{P,S,K,M\}$ be defined as in (\ref{eq:estp})--(\ref{eq:estm}), respectively.
	Define $\bias_{\zeta}\define\mean(\hat\rho_{\zeta}-\rho)$. Let $S_1$ and $S_2$ bear the same meanings as in
	Lemma~\ref{lem:lemknown}.
	Write $\sigma_S^2\define \var(\RS)$, $\sigma_K^2\define\var(\RK)$ and $\sigma_{S,K}\define\cov(\RS,\RK)$. 
	Then, under the same assumptions made as in Theorem~\ref{thm:varrs},  
	\begin{align}
		\label{eq:thmbiasp}
		\bias_P& \simeq -\frac{1}{2n}\rho(1-\rho^2)\\
		\label{eq:thmbiass}
		\bias_S& \simeq \frac{\sqrt{4-\rho^2}}{n+1}\left(S_1-3S_2\right)-\frac{\pi^2\rho}{72}\sigma_S^2\\
		\label{eq:thmbiask}
		\bias_K& \simeq -\frac{\pi^2\rho}{8}\sigma_K^2\\
		\label{eq:thmbiasm}
		\bias_M& \simeq -\frac{1}{72}\frac{\pi^2\rho}{(n{-}2)^2}\left[(n{+}1)^2\sigma_S^2{-}6(n{+}1)\sigma_{S,K}{+}9\sigma_K^2\right]. 
	\end{align}
\end{Thm}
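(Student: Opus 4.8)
The plan is to treat every estimator as a smooth function of one or two rank statistics whose first two moments are already known, and to extract the bias by a second-order Taylor expansion (the delta method). Write $g_S(t)\define 2\sin(\tfrac{\pi}{6}t)$ and $g_K(t)\define\sin(\tfrac{\pi}{2}t)$, so that $\hat\rho_S=g_S(\RS)$ and $\hat\rho_K=g_K(\RK)$. For a statistic $T$ with mean $\mu_T=\mean(T)$ and a three-times differentiable $g$, Taylor's theorem gives
\[
\mean[g(T)]=g(\mu_T)+\tfrac12 g''(\mu_T)\var(T)+R,
\]
where $R$ collects the contributions of the third and higher central moments of $T$. Since $\RS,\RK\in[-1,1]$ are bounded $U$-statistics whose variances are $O(n^{-1})$ and whose third central moments are $O(n^{-2})$, the remainder $R$ is $O(n^{-2})$ and is absorbed into the $\simeq$. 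The bias is then read off as $\bias=[g(\mu_T)-\rho]+\tfrac12 g''(\mu_T)\var(T)$.

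The case $\hat\rho_P$ is immediate: equation (\ref{eq:lemmeanrp}) already yields $\mean(\RP)-\rho=-\tfrac{1}{2n}\rho(1-\rho^2)+O(n^{-2})$, which is (\ref{eq:thmbiasp}). For $\hat\rho_K$ I would use that (\ref{eq:lemmeanrk}) makes $\tfrac{\pi}{2}\mean(\RK)=S_1$ \emph{exactly}, so $g_K(\mu_K)=\sin S_1=\rho$ and the first-order term vanishes identically; the bias reduces to the curvature term $\tfrac12 g_K''(\mu_K)\sigma_K^2=-\tfrac{\pi^2}{8}\rho\,\sigma_K^2$, using $g_K''(t)=-\tfrac{\pi^2}{4}\sin(\tfrac{\pi}{2}t)$ and $\sin S_1=\rho$, which is (\ref{eq:thmbiask}). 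For $\hat\rho_S$ I would substitute $\tfrac{\pi}{6}\mean(\RS)=S_2+\tfrac{S_1-3S_2}{n+1}$ from (\ref{eq:eqrelation}) and expand $g_S(\mu_S)=2\sin\!\big(S_2+\tfrac{S_1-3S_2}{n+1}\big)$ to first order in the $O(n^{-1})$ perturbation; with $2\sin S_2=\rho$ and $2\cos S_2=\sqrt{4-\rho^2}$ this produces the first term $\tfrac{\sqrt{4-\rho^2}}{n+1}(S_1-3S_2)$ of (\ref{eq:thmbiass}), while $\tfrac12 g_S''(\mu_S)\sigma_S^2\to-\tfrac{\pi^2}{72}\rho\,\sigma_S^2$ supplies the second.

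The interesting case is $\hat\rho_M=2\sin(A)$ with argument $A\define\tfrac{\pi}{6}\RS-\tfrac{\pi}{2}\tfrac{\RK-\RS}{n-2}$. First I would compute $\mean(A)$ exactly: substituting $\mean(\RK)=\tfrac{2}{\pi}S_1$ and (\ref{eq:eqrelation}) and simplifying, the finite-sample correction cancels and $\mean(A)=S_2$ exactly. Consequently $2\sin(\mean A)=2\sin S_2=\rho$, so $\hat\rho_M$ carries no first-order bias and its leading bias is pure curvature, $-\tfrac{\rho}{2}\var(A)$ (since $(2\sin)''=-2\sin$ and $2\sin S_2=\rho$). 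Finally I would write $A=\tfrac{\pi(n+1)}{6(n-2)}\RS-\tfrac{\pi}{2(n-2)}\RK$ and expand $\var(A)=a^2\sigma_S^2+b^2\sigma_K^2+2ab\,\sigma_{S,K}$ with $a=\tfrac{\pi(n+1)}{6(n-2)}$ and $b=-\tfrac{\pi}{2(n-2)}$; collecting the common factor $\tfrac{\pi^2}{36(n-2)^2}$ leaves $(n+1)^2\sigma_S^2-6(n+1)\sigma_{S,K}+9\sigma_K^2$, and multiplying by $-\tfrac{\rho}{2}$ gives exactly (\ref{eq:thmbiasm}).

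The main obstacle is not the algebra but the rigorous justification of the remainder bookkeeping: for each estimator I must confirm that the two retained terms are genuinely the leading $O(n^{-1})$ contributions and that all discarded pieces — the terms beyond second order in the univariate expansions, and for $\hat\rho_M$ the bivariate third-order remainder in $(\RS,\RK)$ — are $O(n^{-2})$. This hinges on the facts that $\var(\RS)$, $\var(\RK)$ and $\cov(\RS,\RK)$ are all $O(n^{-1})$ (Theorem~\ref{thm:varrs}, Theorem~\ref{thm:covrsrk} and Lemma~\ref{lem:lemknown}) while the relevant third central and mixed moments of these bounded $U$-statistics are $O(n^{-2})$. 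The neat exact cancellation $\mean(A)=S_2$ for $\hat\rho_M$, which relies on combining the exact $\mean(\RK)$ with (\ref{eq:eqrelation}), is precisely what keeps its bias at curvature order instead of leaving a spurious $O(n^{-1})$ first-order contribution.
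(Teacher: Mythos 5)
Your proposal is correct and follows essentially the same route as the paper: the delta method with second-order Taylor expansions about the means, combined with (\ref{eq:lemmeanrp}), (\ref{eq:lemmeanrk}) and (\ref{eq:eqrelation}). The only (minor) difference is that for $\hat\rho_M$ you expand in the single variable $A=\frac{\pi}{6}\RS-\frac{\pi}{2}\frac{\RK-\RS}{n-2}$ and explicitly verify the exact cancellation $\mean(A)=S_2$, whereas the paper performs the equivalent bivariate expansion in $(\RS,\RK)$ and simply asserts $\hat\rho_M(\overline{\RS},\overline{\RK})=\rho$; your version makes that step more transparent.
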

\begin{proof}
	The first statement (\ref{eq:thmbiasp}) follows directly from (\ref{eq:lemmeanrp}) in Lemma~\ref{lem:lemknown}.
	Now we proceed to evaluate $\bias_S$, $\bias_K$ and $\bias_M$. For convenience, 
	write $\overline{\RS} {\define}\mean(\RS)$, $\overline{\RK} {\define}\mean(\RK)$, 
	$\delta_S{\define}\RS{-}\overline{\RS}$, and $\delta_K{\define}\RK{-}\overline{\RK}$.
	Expanding (\ref{eq:ests}) around $\overline{\RS}$ yields 
	\begin{equation}
		\hat\rho_S{=}2 \sin\left(\frac{\pi}{6}\overline{\RS}\right){+}\frac{\pi}{3}\cos\left(\frac{\pi}{6}\overline{\RS}\right)\delta_S
		{-} \frac{\pi^2}{36}\sin\left(\frac{\pi}{6}\overline{\RS}\right)\delta_S^2{+}\cdots.
		\label{eq:estsexp}
	\end{equation}
	Taking expectation of both sides in (\ref{eq:estsexp}), applying $\mean(\delta_S)=0$, $\mean(\delta_S^2)=\sigma_S^2$
	and ignoring the high order infinitesimals, we have
	\begin{equation}
		\mean(\hat\rho_S)\simeq 2 \sin\left(\frac{\pi}{6}\overline{\RS}\right)- \frac{\pi^2}{36}\sin\left(\frac{\pi}{6}\overline{\RS}\right)\sigma_S^2.
		\label{eq:estsmean}
	\end{equation}
	Substituting (\ref{eq:eqrelation}) into (\ref{eq:estsmean}),
	expanding to the order of $(n+1)^{-1}$, and subtracting $\rho$ thereafter, 
	we obtain the result  (\ref{eq:thmbiass}). In a similar way we have
	\begin{equation*}
		\mean(\hat\rho_K)\simeq\rho-\frac{\pi^2\rho}{8}\sigma_K^2
	\end{equation*}
	which leads directly to (\ref{eq:thmbiask}). Performing Taylor expansion of $\hat\rho_M(\RS,\RK)$
	around $(\overline\RS,\overline\RK)$ till the second order, we have
	\begin{equation}
	\begin{split}
		\hat\rho_M&=\hat\rho_M(\overline\RS,\overline\RK)+
		\frac{\partial(\hat\rho_M)}{\partial(\RS)}\delta_S+
		\frac{\partial(\hat\rho_M)}{\partial(\RK)}\delta_K\\
		&+\frac{1}{2}\left[ \frac{\partial^2(\hat\rho_M)}{\partial(\RS)^2}\delta_S^2{+}\frac{\partial^2(\hat\rho_M)}{\partial(\RK)^2}\delta_K^2
		{+}\frac{2\partial^2(\hat\rho_M)\delta_S\delta_K}{\partial(\RS)\partial(\RK)}\right]+\cdots.
	\end{split}
		\label{eq:eqtaylorrhom}
	\end{equation}
	Taking expectation of both sides in (\ref{eq:eqtaylorrhom}), ignoring high order infinitesimals, applying the results 
	$\hat\rho_M(\overline\RS,\overline\RK)=\rho$,  $\mean(\delta_S)=0$,  $\mean(\delta_K)=0$, $\mean(\delta_S^2)=\sigma_S^2$, $\mean(\delta_K^2)=\sigma_K^2$,
	$\mean(\delta_S,\delta_K)=\sigma_{S,K}$ along with the second order partial derivatives
	\begin{align*}
	\frac{\partial^2\hat\rho_M(\overline\RS,\overline\RK)}{\partial(\RS)^2}&=-\frac{\pi^2\rho}{36}\frac{(n+1)^2}{(n-2)^2}\\
	\frac{\partial^2\hat\rho_M(\overline\RS,\overline\RK)}{\partial(\RK)^2}&=-\frac{\pi^2\rho}{4}\frac{1}{(n-2)^2}\\
	\frac{\partial^2\hat\rho_M(\overline\RS,\overline\RK)}{\partial(\RS)\partial(\RK)}&=\frac{\pi^2\rho}{12}\frac{n+1}{(n-2)^2}
\end{align*}
and subtracting $\rho$ thereafter, we arrive at the forth theorem statement (\ref{eq:thmbiasm}), thus completing the proof. 
\end{proof}
\begin{Rmk}
From (\ref{eq:thmbiasp})--(\ref{eq:thmbiasm}), it follows that, for all the four estimators,
\begin{itemize}
	\item  $\bias_\zeta(\rho)=\bias_\zeta(-\rho)$ (odd symmetry);
	\item  $\rho\bias_\zeta(\rho)\le 0$ (negative bias); 
	\item  $\bias_\zeta=0$ for $\rho\in\{-1,0,1\}$;
	\item  $\bias_\zeta\sim O(n^{-1})$ as $n\to\infty$.
\end{itemize}
Moreover, contrary to $\bias_P$ and $\bias_K$, $\bias_S$ and $\bias_M$ cannot be expressed into  
elementary functions due to the intractability involved in (\ref{eq:thmvarrs}) and (\ref{eq:thmcovrsrk}), the expressions of $\var(\RS)$ and $\cov(\RS,\RK)$, respectively.
\end{Rmk}

\subsection{Approximation of Variances} 
Besides the bias effect just discussed, the variance is another important \emph{figure of merit} when comparing the performance of the estimators $\hat\rho_\zeta$, $\zeta\in\{P,S,K,M\}$.
From (\ref{eq:lemmeanrp}), it follows that 
\begin{equation}
	\var(\hat\rho_P)\simeq\frac{(1-\rho^2)^2}{n-1}.
	\label{eq:varrhop}
\end{equation}
By the {delta method}, it follows that~\cite{kendall90} 
	\begin{align}
		\label{eq:varrhos}
		\var(\hat\rho_S)&\simeq\frac{\pi^2(4-\rho^2)}{36}\var(\RS)\\
		\label{eq:varrhok}
		\var(\hat\rho_K)&\simeq\frac{\pi^2(1-\rho^2)}{4}\var(\RK).
	\end{align} 
Now we only need to deal with $\var(\hat\rho_M)$, which is stated below.
\begin{Thm} \label{thm:mse}
	Let $\hat\rho_M$ be defined as in (\ref{eq:estm}).
	Then, under the same assumptions made as in Theorem~\ref{thm:varrs},  
	\begin{equation}
		\var(\hat\rho_M)\simeq\frac{\pi^2(4-\rho^2)}{36(n-2)^2}\left[(n+1)^2\sigma_S^2-6(n+1)\sigma_{S,K}+9\sigma_K^2\right]. 
		\label{eq:varrhom}
	\end{equation}
\end{Thm}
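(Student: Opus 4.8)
The plan is to apply the delta method, exactly as was done for $\var(\hat\rho_S)$ and $\var(\hat\rho_K)$ in (\ref{eq:varrhos})--(\ref{eq:varrhok}), but now to a smooth function of the two \emph{correlated} statistics $\RS$ and $\RK$. The crucial simplification I would exploit first is that the argument of the sine in (\ref{eq:estm}) is \emph{linear} in $\RS$ and $\RK$. Writing
\[
\Psi\define\frac{\pi}{6}\RS-\frac{\pi}{2}\frac{\RK-\RS}{n-2}=\frac{\pi(n+1)}{6(n-2)}\RS-\frac{\pi}{2(n-2)}\RK,
\]
so that $\hat\rho_M=2\sin\Psi$, collapses the two-dimensional problem into a one-dimensional one: I only need $\var(\Psi)$ together with a single scalar delta-method step through the map $t\mapsto 2\sin t$.

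Second, I would compute the variance of the linear form $\Psi$ directly from the bilinearity of covariance, giving $\var(\Psi)=a^2\sigma_S^2+b^2\sigma_K^2+2ab\,\sigma_{S,K}$ with $a\define\pi(n+1)/[6(n-2)]$ and $b\define-\pi/[2(n-2)]$. Pulling out the common prefactor $\pi^2/[36(n-2)^2]$ from $a^2$, $b^2$ and $2ab$ already produces the bracket $(n+1)^2\sigma_S^2-6(n+1)\sigma_{S,K}+9\sigma_K^2$ appearing in (\ref{eq:varrhom}). The scalar delta method then yields $\var(\hat\rho_M)\simeq(2\cos\overline\Psi)^2\var(\Psi)$, where $\overline\Psi\define\mean(\Psi)=a\overline{\RS}+b\overline{\RK}$, and the asymptotic normality of $\RS$ and $\RK$ as $U$-statistics legitimizes this first-order approximation for large $n$.

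The one genuinely non-mechanical ingredient --- and the step I expect to carry the most weight --- is evaluating $\cos\overline\Psi$ in closed form. Here I would reuse the fact already established in the proof of Theorem~\ref{thm:bias}, namely that $\hat\rho_M$ is constructed precisely so that $\hat\rho_M(\overline{\RS},\overline{\RK})=2\sin\overline\Psi=\rho$; this is exactly relationship (\ref{eq:eqrelation}) together with $\mean(\RK)=\frac{2}{\pi}S_1$ from (\ref{eq:lemmeanrk}). Consequently $\sin\overline\Psi=\rho/2$, so $\overline\Psi=S_2$ and $\cos^2\overline\Psi=1-\rho^2/4=(4-\rho^2)/4$, whence $(2\cos\overline\Psi)^2=4-\rho^2$. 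Multiplying this factor into $\var(\Psi)$ produces the stated $\frac{\pi^2(4-\rho^2)}{36(n-2)^2}\big[(n+1)^2\sigma_S^2-6(n+1)\sigma_{S,K}+9\sigma_K^2\big]$, completing the proof. The remaining work is only the routine algebraic bookkeeping of the three coefficients $a^2$, $2ab$, $b^2$, analogous in spirit to the second-derivative computation already carried out for $\bias_M$ in (\ref{eq:thmbiasm}).
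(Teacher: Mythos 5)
Your proposal is correct and is essentially the paper's own argument: the paper applies the bivariate first-order delta method to $\hat\rho_M(\RS,\RK)$ with partial derivatives $\frac{\pi}{6}\frac{n+1}{n-2}\sqrt{4-\rho^2}$ and $-\frac{\pi}{2}\frac{1}{n-2}\sqrt{4-\rho^2}$, which are exactly your $2\cos\overline\Psi\cdot a$ and $2\cos\overline\Psi\cdot b$. Your factorization into ``variance of the linear form $\Psi$, then a scalar delta step through $t\mapsto 2\sin t$ using $\overline\Psi=S_2$'' is just a tidier bookkeeping of the same computation, and your verification that $\overline\Psi=S_2$ via (\ref{eq:eqrelation}) and (\ref{eq:lemmeanrk}) is sound.
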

\begin{proof}
	Using the delta method~\cite{kendall91}, it follows that
\begin{equation}
	\var(\hat\rho_M){\simeq} \left[\frac{\partial(\hat\rho_M)}{\partial(\RS)}\right]^2\hspace{-2mm}\sigma_S^2{+} \left[\frac{\partial(\hat\rho_M)}{\partial(\RK)}\right]^2\hspace{-2mm}\sigma_K^2
	{+}2\frac{\partial(\hat\rho_M)}{\partial(\RS)}\frac{\partial(\hat\rho_M)}{\partial(\RK)}\sigma_{S,K}.
	\label{eq:thmvarrhom}
\end{equation}
	The theorem thus follows with substitutions of the partial derivatives 
	\begin{align*}
		\frac{\partial\hat\rho_M(\overline\RS,\overline\RK)}{\partial(\RS)}&=\frac{\pi}{6}\frac{n+1}{n-2}\sqrt{4-\rho^2}\\
		\frac{\partial\hat\rho_M(\overline\RS,\overline\RK)}{\partial(\RK)}&=\frac{\pi}{2}\frac{-1}{n-2}\sqrt{4-\rho^2}
	\end{align*}
	into (\ref{eq:thmvarrhom}), respectively.
\end{proof}
\subsection{Asymptotic Relative Efficiency} 
Thus far in this section we have established the analytical results  with an emphasis 
on limited-sized bivariate normal samples. For a better understanding of the fourt estimators, we will shift our attention
to the asymptotic properties of $\hat\rho_\zeta$ in the sequel. Since $\lim_{n\to\infty} \mean(\hat\rho_\zeta) =\rho$, 
we can compare their performances by means of the \emph{asymptotic relative efficiency}, which is defined as~\cite{kendall91}
\begin{equation} \label{eq:aredef}
	\ARE_\zeta\define \lim_{n\rightarrow\infty}\frac{\var(\hat \rho_P)}{\var(\hat \rho_\zeta)},\quad\zeta\in\{P,S,K,M\}. 
\end{equation}
As remarked before, we employ $\hat\rho_P$ as a benchmark, since, for large-sized bivariate normal samples,  $\hat\rho_P$  approaches 
the Cramer-Rao lower bound (CRLB)~\cite{kendall91}
\begin{equation} \label{eq:crlb}
	\mathrm{CRLB}=\frac{(1-\rho^2)^2}{n}.
\end{equation}
From (\ref{eq:aredef}) it is obvious that $\ARE_P=1$. 
Moreover, comparing  (\ref{eq:varrhos}) and (\ref{eq:varrhom}), it is easily seen that 
$
\lim_{n\to\infty} \var(\hat\rho_S)/\var(\hat\rho_M)=1,
$
which leads readily to $\ARE_S=\ARE_M$ by referring to (\ref{eq:aredef}). 
 Then we only need to focus on $\ARE_S$ and $\ARE_K$ in the following discussion.
\begin{Thm} \label{thm:are}
	Let $\ARE_S$ and $\ARE_K$ be defined as in (\ref{eq:aredef}). Then
	\begin{align}
		\label{eq:thmares}
\ARE_S&= \frac{36(1-\rho^2)^2}{(4-\rho^2)\left[9\pi^2\Omega_1(\rho)-324\left(\sin^{-1}\frac{1}{2}\rho\right)^2\right]}\\
		\label{eq:thmarek}
		\ARE_K &= \frac{9(1-\rho^2)}{\pi^2-36\left(\sin^{-1}\frac{1}{2}\rho\right)^2}.
	\end{align}
\end{Thm}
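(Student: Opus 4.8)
The plan is to evaluate the two limits in the definition (\ref{eq:aredef}) by substituting the asymptotic variance expressions already assembled earlier in this section, and then simplifying algebraically. First I would record the benchmark: from (\ref{eq:varrhop}), since $n-1\sim n$ as $n\to\infty$, we have $\var(\hat\rho_P)\simeq (1-\rho^2)^2/n$. Both relative efficiencies then reduce to a ratio of two $O(n^{-1})$ quantities, so the $1/n$ factors cancel and each limit collapses to a purely $\rho$-dependent expression.

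For $\ARE_S$ I would combine (\ref{eq:varrhos}) with the large-sample form (\ref{eq:thmvarrsasymp}) of $\var(\RS)$, obtaining $\var(\hat\rho_S)\simeq \frac{\pi^2(4-\rho^2)}{36n}\left[9\Omega_1(\rho)-\frac{324 S_2^2}{\pi^2}\right]$, where $S_2=\sin^{-1}\frac{1}{2}\rho$. Forming the ratio $\var(\hat\rho_P)/\var(\hat\rho_S)$, absorbing the factor $\pi^2/36$ into the bracket, and moving the $324 S_2^2/\pi^2$ term through the $\pi^2$ yields exactly (\ref{eq:thmares}); no further approximation is needed, so the only content here is careful bookkeeping of the constants.

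For $\ARE_K$ I would combine (\ref{eq:varrhok}) with the exact variance (\ref{eq:lemvarrk}) of $\RK$. The key observation is that, among the terms inside the bracket of (\ref{eq:lemvarrk}), only $2(n-2)\left(\frac{1}{9}-\frac{4S_2^2}{\pi^2}\right)$ grows linearly in $n$, whereas $1-\frac{4S_1^2}{\pi^2}$ is $O(1)$ and is asymptotically dominated. Hence $\var(\RK)\simeq \frac{4}{n}\left(\frac{1}{9}-\frac{4S_2^2}{\pi^2}\right)$, so that $\var(\hat\rho_K)\simeq \frac{1-\rho^2}{n}\left(\frac{\pi^2}{9}-4S_2^2\right)$. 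Taking the ratio against $\var(\hat\rho_P)$ and multiplying numerator and denominator by $9$ produces (\ref{eq:thmarek}).

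The computation is essentially routine; the single point demanding care is the asymptotic truncation of $\var(\RK)$ in (\ref{eq:lemvarrk}). It is precisely the suppression of the $O(1)$, $S_1$-dependent summand that eliminates all dependence on $\sin^{-1}\rho$ from $\ARE_K$, leaving only the $\sin^{-1}\frac{1}{2}\rho$ contribution; an analogous suppression is already built into the asymptotic formula (\ref{eq:thmvarrsasymp}). I would therefore verify the consistency of the retained leading orders in both numerator and denominator of each ratio before concluding, since a mismatch of one power of $n$ there would be the only way to derail an otherwise purely algebraic argument.
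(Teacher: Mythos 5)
Your proposal is correct and follows essentially the same route as the paper, which simply substitutes the delta-method variances (\ref{eq:varrhos}) and (\ref{eq:varrhok}) into the definition (\ref{eq:aredef}); you merely make explicit the leading-order truncations of $\var(\RS)$ via (\ref{eq:thmvarrsasymp}) and of $\var(\RK)$ via (\ref{eq:lemvarrk}) that the paper leaves implicit. The arithmetic in both ratios checks out exactly against (\ref{eq:thmares}) and (\ref{eq:thmarek}).
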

\begin{proof}
Substituting  (\ref{eq:varrhos}) and (\ref{eq:varrhok}) into (\ref{eq:aredef}) 
yields (\ref{eq:thmares}) and (\ref{eq:thmarek}), respectively,  and the proof completes.
\end{proof}
\begin{Rmk} Due to the intractability of $\Omega_1(\rho)$ in (\ref{eq:thmares}), $\ARE_S$ cannot
	be expressed into elementary functions in general. However, exact results are obtainable for 
	$\rho=0,\pm1$. Substituting $\rho=0$ and $\Omega_1(0)=1/9$ into (\ref{eq:thmares}), 
	it is easy to verify that
	\[
	\ARE_S(0)=\frac{9}{\pi^2}\simeq 0.9119  
	\]
	which is a well known result~\cite{kendall90}. In our previous work \cite{xu2010} we also obtained that
	\begin{equation}
		\label{eq:arespm1}
	\ARE_S(\pm1)=\frac{15+11\sqrt{5}}{57}\simeq 0.6947.
	\end{equation}
	Now let us investigate $\ARE_K$. It follows from (\ref{eq:thmarek})
	that, $\ARE_K$ is expressible as elementary functions of $\rho$, and is therefore more tractable than $\ARE_S$.
	In other words, we can evaluate easily any value of $\ARE_K$ with respect to any value of $\rho \ne \pm 1$.
	For example, substituting $\rho=0$ into (\ref{eq:thmarek}) yields
	\[
	\ARE_K(0)=\frac{9}{\pi^2}
	\]
	which is identical to $\ARE_S(0)$ and also well known~\cite{kendall90}. However, when $\rho\to\pm1$, an extra effort is necessary,
	since both the numerator and denominator of (\ref{eq:thmarek}) vanish in this case.
	Apply the L'Hopital's rule, we find the following result
	\begin{equation}
		\ARE_K\big|_{\rho\to\pm1}=\frac{1}{4}\frac{\rho\sqrt{4-\rho^2}}{\sin^{-1}\frac{1}{2}\rho}\Bigg|_{\rho=\pm1}
		=\frac{3\sqrt{3}}{2\pi}\simeq 0.8270
		\label{eq:arekpm1}
	\end{equation}
	which is greater than $\ARE_S(\pm1)$. In fact, a comparison of $\ARE_S$ and $\ARE_K$ in Section~\ref{sect:sectnum}  
	suggest that $\ARE_S\le \ARE_K$ for all $\rho\in[-1,1]$.  	
\end{Rmk}
\begin{table*}
	\centering
	\caption{Values of $\Omega_1(\rho)$ and $\Omega_2(\rho)$ in Theorem~\ref{thm:varrs}}.
	\includegraphics{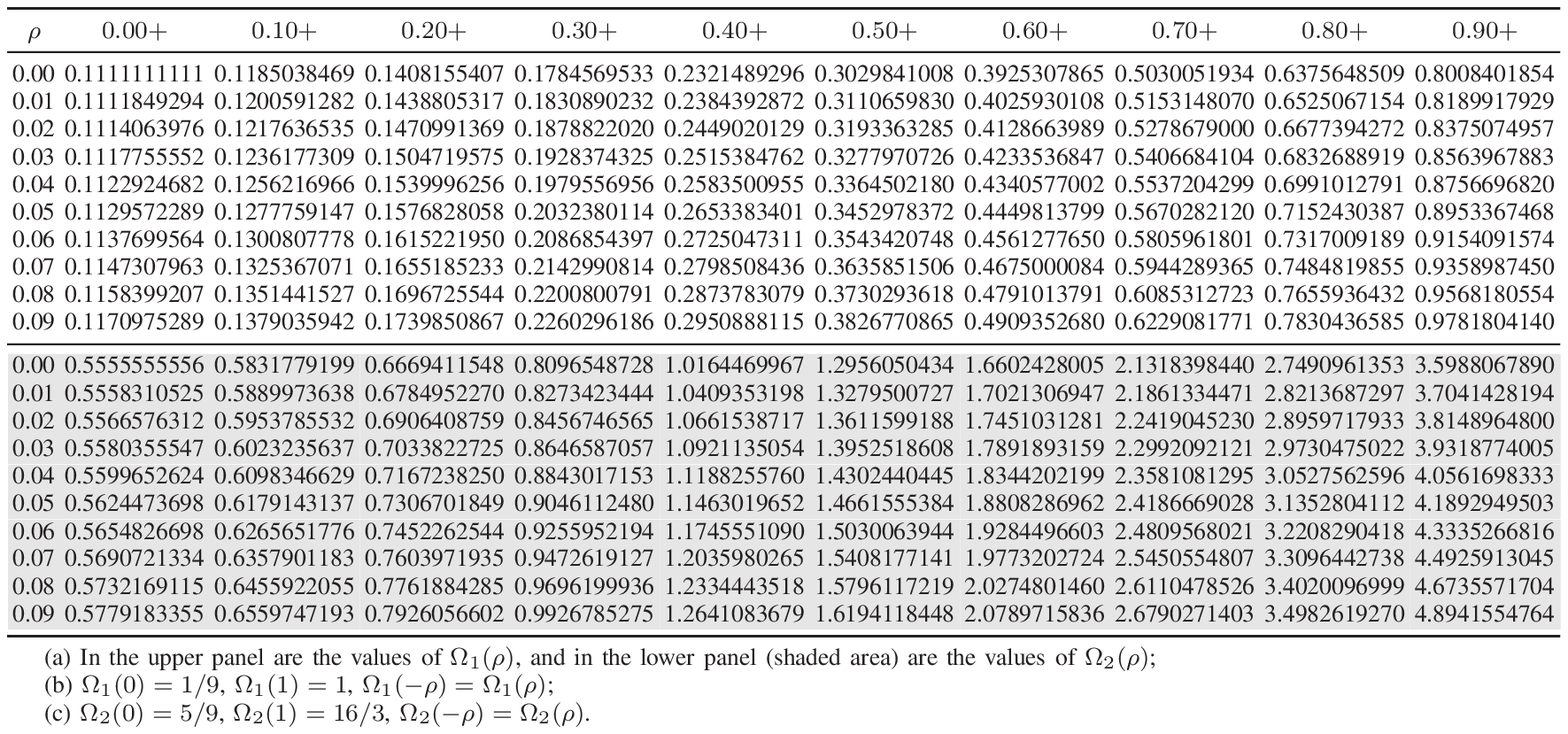}
\label{tab:tabO1O2}
\end{table*}
\begin{table*}
	\centering
	\caption{Values of $\Omega_3(\rho)$ in Theorem~\ref{thm:covrsrk}}
	\includegraphics{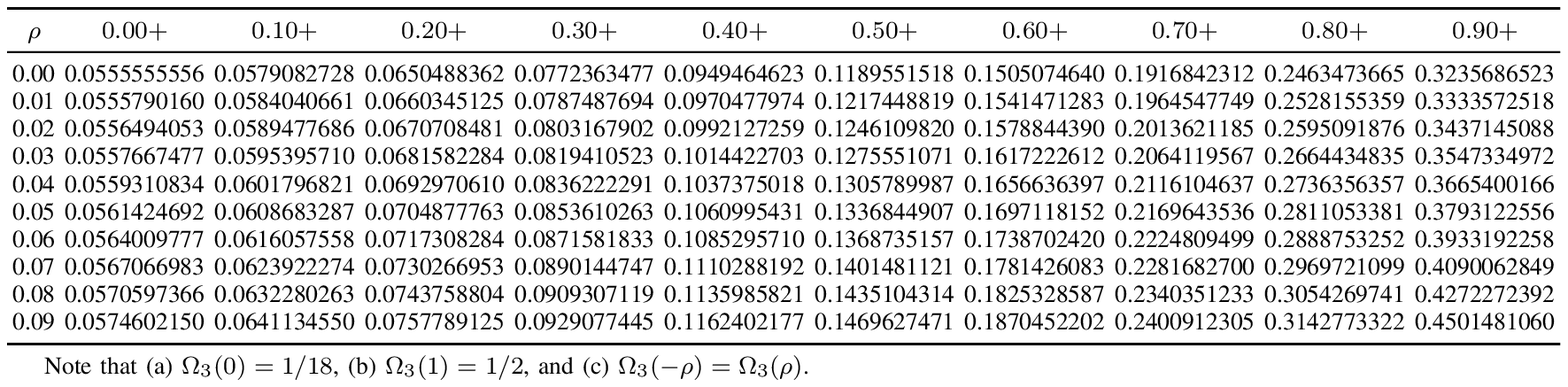}
\label{tab:tabO3}
\end{table*}
\section{Numerical Results} \label{sect:sectnum}
In this section we aim at
1) tabulating the values of $\Omega_1(\rho)$, $\Omega_2(\rho)$ (in Theorem~\ref{thm:varrs}) and $\Omega_3(\rho)$ (in Theorem~\ref{thm:covrsrk})
that are not expressible as elementary functions,
2) verifying the theoretical results Theorems~\ref{thm:varrs} to \ref{thm:are} established in previous sections, and
3) comparing the performances of the four estimators defined in (\ref{eq:estp})--(\ref{eq:estm})
by means of \emph{bias effect}, \emph{mean square error} (MSE) and  ARE  under both the normal and contaminated normal models.
Throughout this section, Monte Carlo experiments are undertaken for $10 \le n\le 100$.
A sample size of $n=1000$ is considered large enough when we investigate the asymptotic behaviors.
The number of trials is set to $5\times10^5$ for reason of accuracy.  
All samples are  generated by functions in the Matlab $\mathtt{Statistics\; Toolbox}^{\mathtt{TM}}$.
Specifically, the normal samples are generated by  \texttt{mvnrnd}, whereas the contaminated normal samples
are generated by \texttt{gmdistribution} and \texttt{random}.  
The notation $\rho=\rho_1(\Delta\rho)\rho_2$ represents a list of $\rho$
starting from $\rho_1$ to $\rho_2$ with increment $\Delta\rho$.
\subsection{Tables of $\Omega_1(\rho)$, $\Omega_2(\rho)$ and $\Omega_3(\rho)$}
Table~\ref{tab:tabO1O2} contains the values of $\Omega_1(\rho)$ and $\Omega_2(\rho)$  in 
(\ref{eq:thmvarrs}), the first statement of Theorem~\ref{thm:varrs} for $\rho=0(0.01)1$.
In the upper panel are the values of $\Omega_1(\rho)$; 
whereas in the lower panel are the values of $\Omega_2(\rho)$. 
Due to the importance of $\var(\RS)$ both in theory and in practice,
the table is made as intensive and accurate as possible, with the increment  $\Delta\rho$ being $0.01$,
and the precision being up to ten decimal places.  
In Table~\ref{tab:tabO3} are tabulated the values of $\Omega_3(\rho)$ in (\ref{eq:thmcovrsrk}) 
of Theorem~\ref{thm:covrsrk} for $\rho=0(0.01)1$. Because of the similar reasons,
the increment $\Delta\rho$ and precision are the same as those in Table~\ref{tab:tabO1O2}.
The values of $\Omega_1(\rho)$, $\Omega_2(\rho)$ and $\Omega_3(\rho)$ with repect to $\rho$ not included in
Tables~\ref{tab:tabO1O2} and \ref{tab:tabO3} can be easily obtained by interpolation. 
Given these tables, we can easily calculate the quantities that depend on $\Omega_1(\rho)$, $\Omega_2(\rho)$ and $\Omega_3(\rho)$,
including $\var(\RS)$, $\var(\hat\rho_S)$, $\var(\hat\rho_M)$, $\bias(\hat\rho_S)$, $\bias(\hat\rho_M)$, 
$\ARE_S$, $\ARE_M$, and so forth. 

\begin{figure}
	\centering
	\includegraphics[width=\columnwidth]{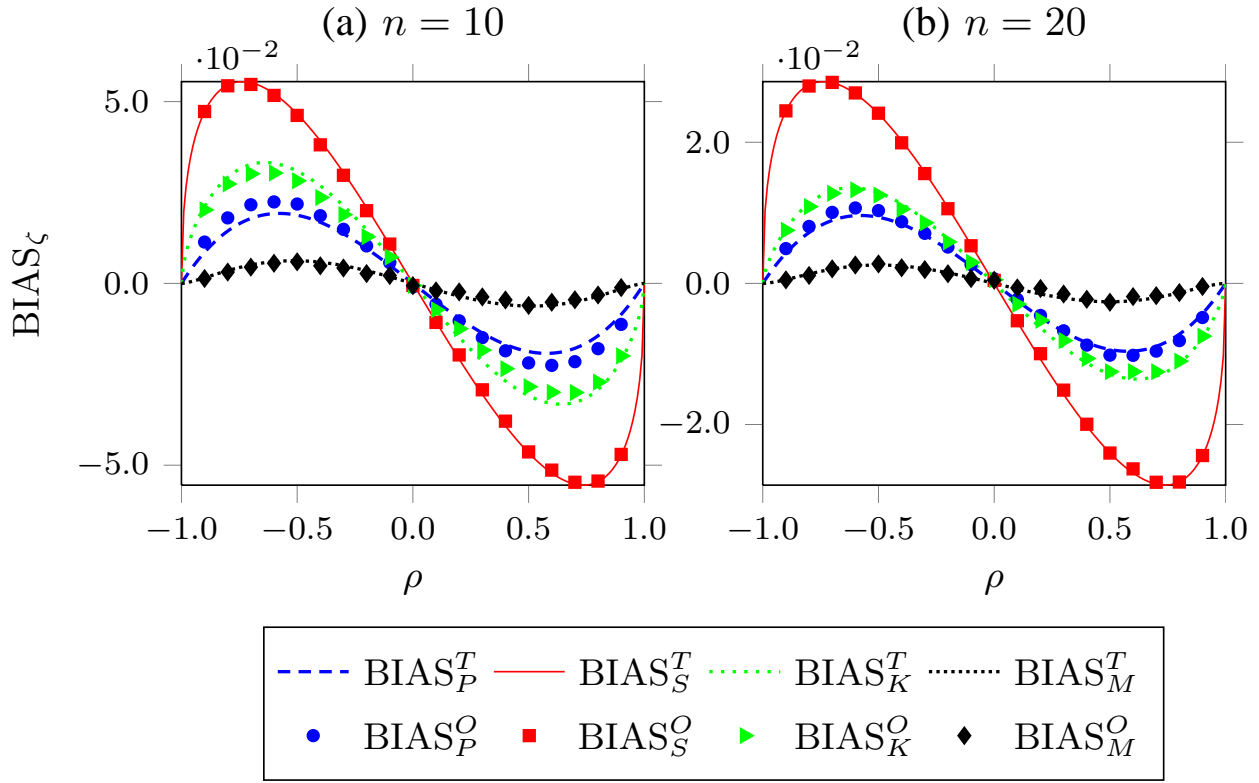}
	\caption{Comparison of $\bias_\zeta$, $\zeta\in\{P,S,K,M\}$ for (a) $n=10$ and (b) $n=20$. Theoretical curves,
	denoted by $\bias_\zeta^T$ in the legend, are plotted  over $\rho=-1(0.01)1$ based on (\ref{eq:thmbiasp})--(\ref{eq:thmbiasm}), respectively;
	whereas the simulation results, denoted by $\bias_\zeta^O$ in the legend, are plotted over $\rho=-0.9(0.1)0.9$.}
\label{fig:figbias}
\end{figure}
\begin{figure}
	\includegraphics[width=\columnwidth]{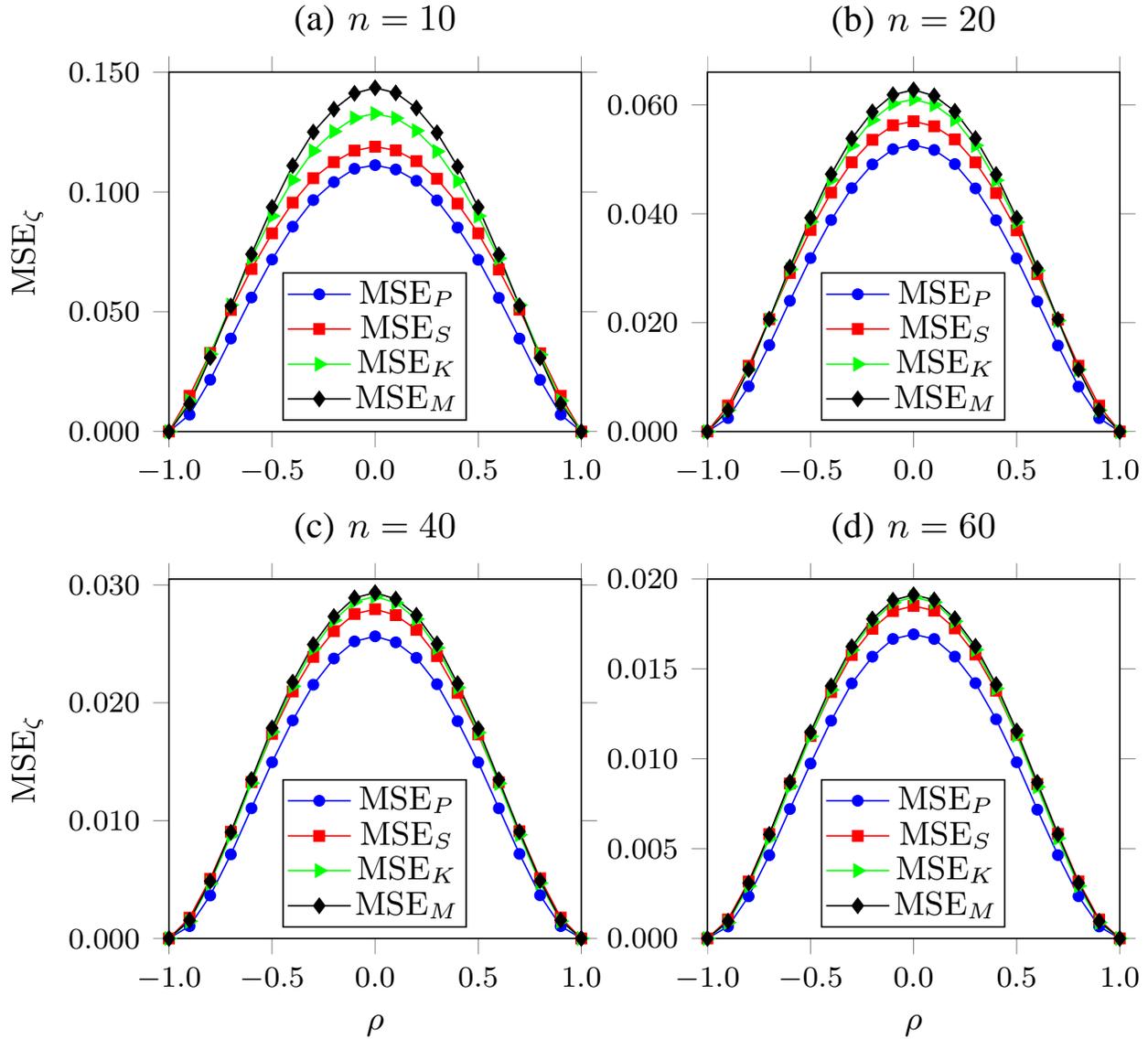}
	\caption{Comparison of observed $\mse_\zeta$, $\zeta\in\{P,S,K,M\}$ for (a) $n=10$, (b) $n=20$,
	(c) $n=40$, and (d) $n=60$ over $\rho=-1(0.1)1$, respectively.} 
	\label{fig:figmse}
\end{figure}
\begin{figure}
	\centering
	\includegraphics[width=\columnwidth]{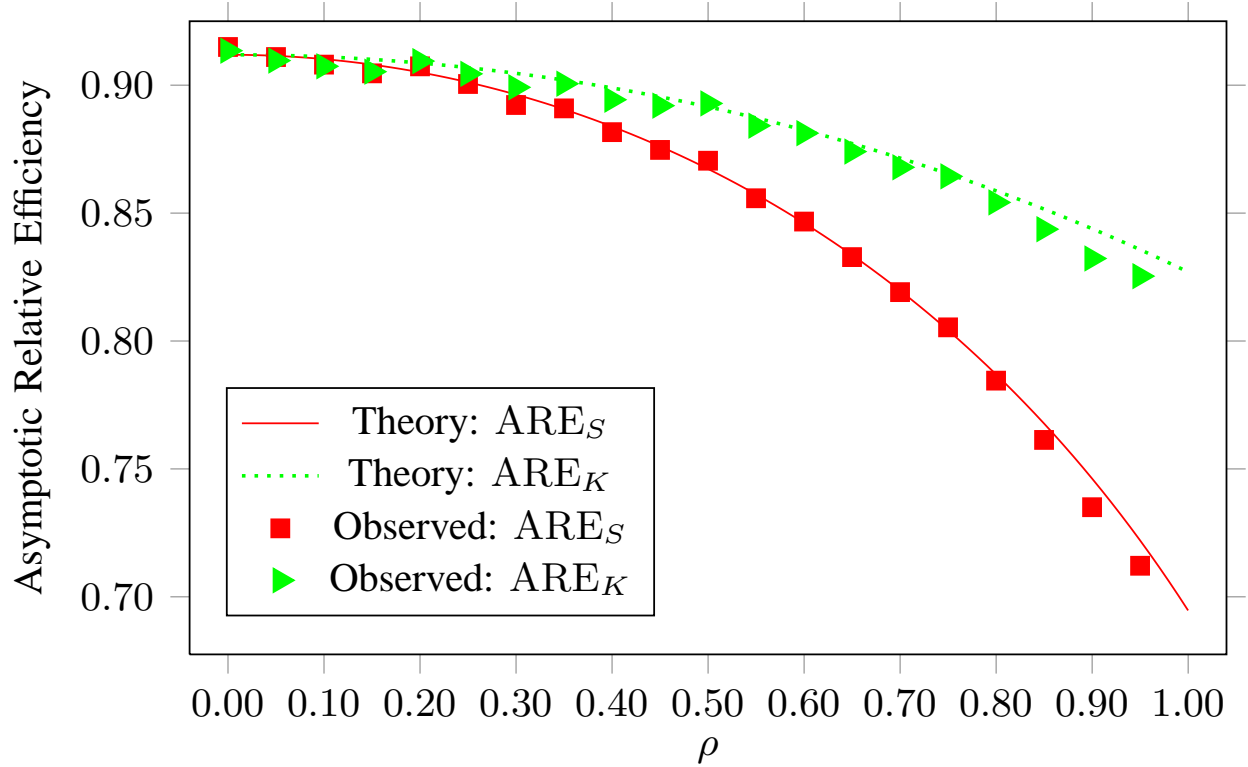}
	\caption{Verification and Comparison of $\ARE_S$ and $\ARE_K$ for $n=1000$  over $\rho=0(0.01)1$, for theoretical curves, 
	and $\rho=0(0.05)0.95$, for simulation results. 
	The results (\ref{eq:arespm1}) and (\ref{eq:arekpm1})
	are used to plot the two theoretical curves for $\rho=1$, respectively.
	}
\label{fig:figare}
\end{figure}
\begin{table*}
	\centering
	\caption{Variances of $\var(\hat\rho_\zeta)$, $\zeta\in\{P,S,K,M\}$ for $n=10,20,30$}
	\includegraphics[width=\textwidth]{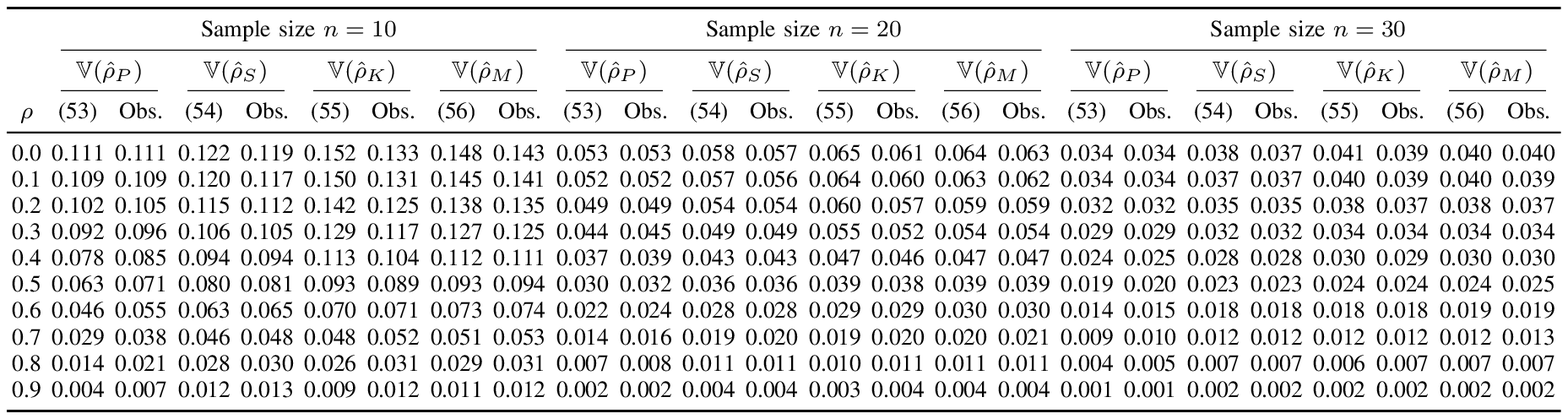}
	\label{tab:tabvar}
\end{table*}
\subsection{Verification of $\bias_\zeta$ and $\var(\hat\rho_\zeta)$ in Small Samples}
Fig.~\ref{fig:figbias} shows the bias effects $\bias_\zeta$ corresponding to the four estimators $\rho_\zeta$, $\zeta\in\{P,S,K,M\}$
for $n=10$ and $n=20$, respectively. It is clearly observed that the magnitudes of $\bias_\zeta$ can be ordered as $\bias_M<\bias_P<\bias_K<\bias_S$.
That is, the performance of $r_S$ is much worse than those of the other three in terms of bias effect in small samples. 
Moreover, it is also observed that  (\ref{eq:thmbiass}) and (\ref{eq:thmbiasm}) with respect to $\bias_S$ and $\bias_M$ 
are more accurate than (\ref{eq:thmbiasp}) and (\ref{eq:thmbiask}) with respect to $\bias_P$ and $\bias_K$.
In other words, the former two formulae agree better than do the latter two  formulae with the corresponding 
simulation results for a sample size $n$ as small as $10$. 
Nevertheless,  the deviations from (\ref{eq:thmbiasp}) and (\ref{eq:thmbiask}) to the corresponding simulation results
are less noticeable when the sample size $n$ is increased up to $20$. 

Table~\ref{tab:tabvar} lists, for each of the three samples sizes, $10$, $20$ and $30$, 
1) the theoretical results (\ref{eq:varrhop})--(\ref{eq:varrhom}) with respect to $\var(\hat\rho_\zeta)$ and
2) the corresponding observed variances from the Monte Carlo simulations. 
It can be seen that (\ref{eq:varrhos}) and (\ref{eq:varrhom}), with respect to $\var(\hat\rho_S)$ and $\var(\hat\rho_M)$,
are accurate enough even though the sample size is as small as $n=10$. 
On the other hand, unfortunately, the theoretical formula (\ref{eq:varrhop}) for $\var(\hat\rho_P)$ 
and (\ref{eq:varrhok}) for $\var(\hat\rho_K)$ deviate substantially from the corresponding 
observed simulation results for the same sample size $n=10$.
However, it appears that these deviations become less noticeable for $n=20$ and negligible for $n=30$.
Therefore, it would be save to use (\ref{eq:varrhop})--(\ref{eq:varrhom}) 
when approximating the variances of $\hat\rho_\zeta$  for $n\ge 20$  in practice.

\subsection{Comparison of MSE in Small Samples}
Contrary to $\bias_\zeta$ illustrated in Fig.~\ref{fig:figbias}, 
the magnitudes of the mean square errors 
\[
\mse_\zeta\define\mean\left[(\hat\rho_\zeta-\rho)^2\right],\, \zeta\in\{P,S,K,M\}
\] 
cannot be ordered in a consistent manner. It appears in Fig.~\ref{fig:figmse} that 
1) $\mse_M>\mse_K>\mse_S>\mse_P$ when $|\rho|$ is around $0$,
2) $\mse_S>\mse_K>\mse_P$ when $|\rho|$ exceeds some threshold, which moves towards $0$ with increase of $n$, 
and 3) the difference between $\mse_K$ and $\mse_S$ around $\rho=0$ decreases steadily with increase of $n$.
Furthermore, due to the asymptotic equivalence between $\hat\rho_S$ and $\hat\rho_M$, 
$\mse_S$ and $\mse_M$ becomes closer and closer as  $n$ increases.

\subsection{Verification and Comparison of $\ARE_S$ and $\ARE_K$}
Fig.~\ref{fig:figare} verifies and compares the performance of $\hat\rho_S$ and $\hat\rho_K$ in terms of ARE. 
For purpose of numerical verification, simulation results for $n=1000$
are superimposed upon the corresponding theoretical curves. Due to the asymptotic equivalence between $\hat\rho_S$ and $\hat\rho_M$,
the results with respect to $\ARE_M$ are not included in Fig.~\ref{fig:figare}.
It can be observed that 
1) the simulations agree well with our theoretical findings in (\ref{eq:thmares}) and (\ref{eq:thmarek}), respectively,
2) $\ARE_K$ lies consistently above $\ARE_S$, indicating the superiority of $\hat\rho_K$ over $\hat\rho_S$ for large samples, and  
3) the performance of $\hat\rho_S$ deteriorates severely as $\rho$ approaching unity, although it performs similarly as $\hat\rho_K$ when $\rho$ is small.     
Note that the remarks on $\ARE_S$ also apply to $\ARE_M$ due to the equivalence between  $\hat\rho_S$ and $\hat\rho_M$ when the sample size $n$ is large.

\begin{figure}
\includegraphics[width=\columnwidth]{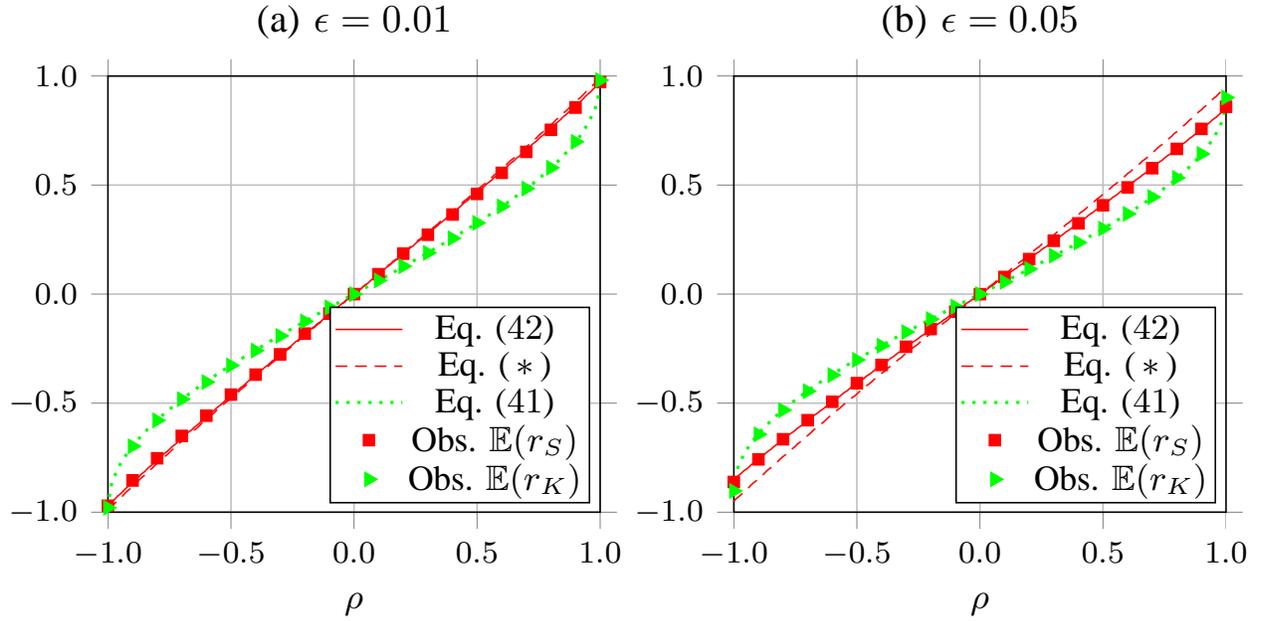}
	\caption{Verification of Theorem~\ref{thm:thmrobust} for (a) $\epsilon=0.01$ and (b) $\epsilon=0.05$
	under the sample size $n=50$ over $\rho=(-1)0.1(1)$, for simulations, and $\rho=(-1)0.01(1)$, for theoretical formulae (\ref{eq:thmgmmmeanrk})
	and (\ref{eq:thmgmmmeanrs}).
	The rest parameters of the model (\ref{eq:gmm}) are set to be $\sigma_X=\sigma_Y=1$, $\lambda_X=\lambda_Y=100$ and $\rho'=0$, respectively.
	The formula  ($*$) concerning $\mean(\RS)$ developed elsewhere~\cite{georgy2002} is also included for comparison.}
\label{fig:figgmmrsrk}
\end{figure}
\begin{figure}
	\includegraphics[width=\columnwidth]{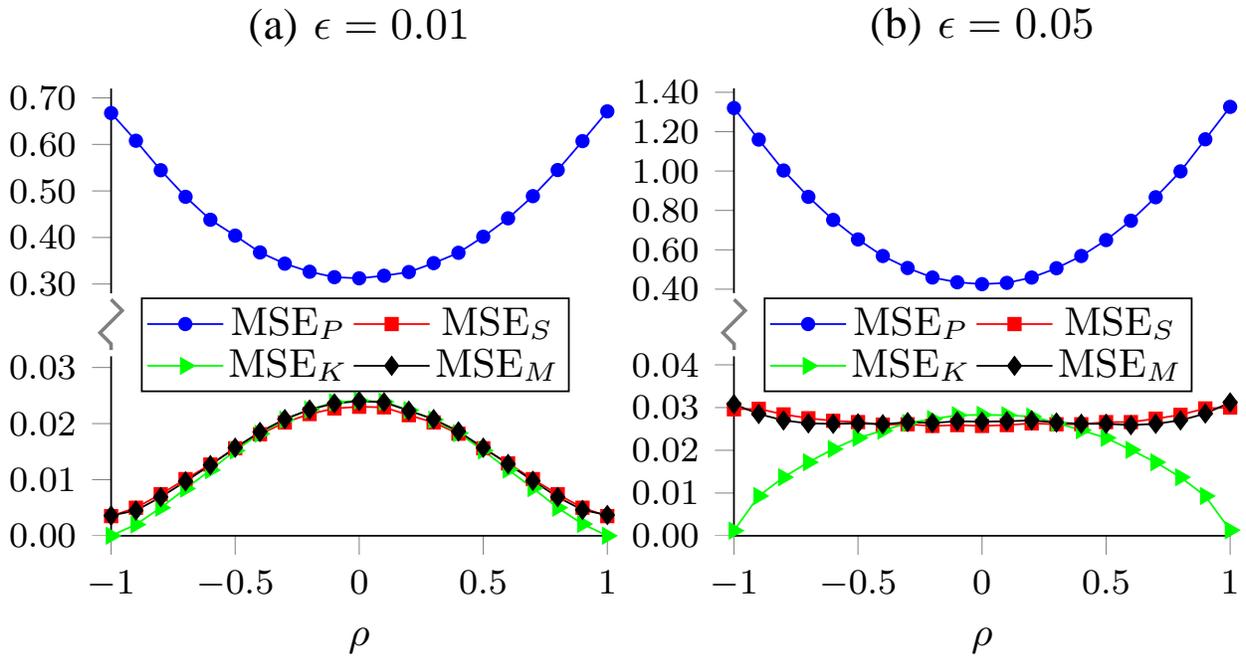}
	\caption{Performance comparison it terms of $\mse_\zeta$, $\zeta\in\{S,P,K,M\}$, over $\rho=-1(0.1)1$ in the 
	contaminated normal model (\ref{eq:gmm}) for (a) $\epsilon=0.01$ and (b) $\epsilon=0.05$ under the sample size $n=50$.
	The rest parameters of the model are set to be $\sigma_X=\sigma_Y=1$, $\lambda_X=\lambda_Y=100$ and $\rho'=0$, respectively.}
	\label{fig:figgmmmse}
\end{figure}
\subsection{Performance of $\hat\rho_\zeta$ in Contaminated Normal Model}
Fig.~\ref{fig:figgmmrsrk} puports to
1) verify the two statements concerning $\mean(\RK)$ and $\mean(\RS)$
in Theorem~\ref{thm:thmrobust}  under the contaminated Gaussian model (\ref{eq:gmm}), and
2) compare our formula (\ref{eq:thmgmmmeanrs}) with the result of ($\ast$) that asserted in~\cite{georgy2002}.
Due to the lack of space, we only present the results for $\epsilon=0.01$ and $\epsilon=0.05$
under the sample size $n=50$ here. For simplicity, the rest parameters of the model (\ref{eq:gmm})
are set to be $\sigma_X=\sigma_Y=1$, $\lambda_X=\lambda_Y=100$ and $\rho'=0$ throughout. 
It is seen that the observed values
of $\mean(\RK)$ and $\mean(\RS)$ agree well with the corresponding theoretical results of 
(\ref{eq:thmgmmmeanrk}) and (\ref{eq:thmgmmmeanrs}) established in Theorem~\ref{thm:thmrobust}.
On the other hand, however, the curves with respect to ($\ast$), especially in Fig.~\ref{fig:figgmmrsrk}(b),
deviate obviously from the corresponding observed values.

Fig.~\ref{fig:figgmmmse} illustrates, in terms of MSE,  the sensitivity of $\hat\rho_P$ 
as well as the robustness of $\hat\rho_S$, $\hat\rho_K$ and $\hat\rho_M$  to impulsive noise.
It is shown in Fig.~\ref{fig:figgmmmse} that the MSE of $\hat\rho_P$ is dramatically larger than those of the
other three estimators, irrespective of how small the  fraction $\epsilon$ of impulsive component is. On the other hand,
it is seen that, despite some minor negative (positive) differences for $\rho$ around $0$ ($\pm1$),
$\mse_S$ and $\mse_M$ behave similarly with $\mse_K$ for $\epsilon=0.01$.
Nevertheless, $\mse_S$ and $\mse_M$ are much larger than $\mse_K$ for $\epsilon=0.05$
when $\rho$ falls in the neighborhood of $\pm1$. Combing Fig.~\ref{fig:figgmmmse}(a) and (b), it would 
be reasonable to rank their performance as $\hat\rho_K\ge\hat\rho_S\sim\hat\rho_M\gg\hat\rho_P$ in terms of MSE 
under the contaminated normal model (\ref{eq:gmm}), where the symbol $\sim$ stands for ``is similar to''. 

\section{Concluding Remarks} \label{sect:sectconclusion}
In this paper we have investigated systematically the properties of the
Spearman's rho and Kendall's tau  for samples drawn from bivariate normal  contained normal populations. 
Theoretical derivations along with Monte Carlo simulations reveal that, contrary to the opinion 
of equivalence between SR and KT in some literature, e.g.~\cite{gilpin1993}, they
behave quite differently in terms of \emph{mathematical tractability}, \emph{bias effect}, \emph{mean square error},
\emph{asymptotic relative efficiency} in the normal cases and \emph{robustness} properties in the contaminated normal model.  

As shown in Theorem~\ref{thm:varrs},  SR is mathematically less tractable than KT, in the sense of
the intractable terms $\Omega_1(\rho)$ and $\Omega_2(\rho)$  in the formula of its variance (\ref{eq:thmvarrs}), in contrast 
with the closed form expression of $\var(\RK)$ in (\ref{eq:lemvarrk}).
However, this mathematical inconvenience is, to some extent, offset by Table~\ref{tab:tabO1O2}
provided in this work, especially from the viewpoint of numerical accuracy.
Moreover, as demonstrated in Fig.~\ref{fig:figbias} and Table~\ref{tab:tabvar},
the convergence speed of the asymptotic formulae (\ref{eq:thmbiask}) and (\ref{eq:varrhok}) 
with respect to $\bias_K$ and $\var(\hat\rho_K)$ are less accurate than those 
of $\bias_S$ and $\var(\hat\rho_S)$ due to the high nonlinearity of the calibration (\ref{eq:estk}).
As a consequence, we do not attach too much importance to such mathematical advantage of KT over SR.

Now let us turn back to the question raised at the very beginning of this paper: which one, SR or KT,
should we use in practice when PPMCC is inapplicable? The answer to this question is different
for different requirements of the task at hand.  Specifically, 
\begin{enumerate}
	\item If \emph{unbiasedness} is on the top priority list, then neither $\hat\rho_S$ or $\hat\rho_K$ should be resorted to.  
	The modified version $\hat\rho_M$ that employs both SR and KT, is definitely the best choice (cf. Fig.~\ref{fig:figbias}).
	\item One the other hand, if minimal MSE is the critical feature and the sample size $n$ is small, then $\hat\rho_S$ ($\hat\rho_K$) 
	     should be employed when the population correlation $\rho$ is weak (strong) (cf. Fig.~\ref{fig:figmse}). 
	\item Since $\hat\rho_K$ outperforms $\hat\rho_S$ asymptotically in terms of  ARE, then $\hat\rho_K$ is the suitable statistic 
	 in large-sample cases (cf. Fig.~\ref{fig:figare}).
	\item If their is impulsive noise in the data, then it would be better to employ $\hat\rho_K$, in terms of MSE, although there is 
		some minor advantage of $\hat\rho_S$ when $\rho$ is in the neighborhood of $0$ (cf. Fig.~\ref{fig:figgmmmse}).
	\item Moreover, in terms of time complexity, $\hat\rho_S$ appears to be superior to $\hat\rho_K$---%
the computational load of the former is $O(n\log n)$; whereas and the computational load of the latter is $O(n^2)$~\cite{xu2007}.
\end{enumerate}

Possessing the desirable properties summarized in Section~\ref{sect:sectgeneral}, Spearman's rho
and Kendall's tau have found wide applications in the literature other than information theory. 
With the new insights uncovered in this paper, these two rank based coefficients
can play complementary roles under the circumstances where Pearson's product moment correlation coefficient
is no longer effective.

\appendices \label{sect:proofs}
\section{Proof of Theorem~\ref{thm:varrs}} \label{app:appvarrs}
\begin{table*}
	\centering
	\caption{Quantities for evaluation of $\mean(\mathcal{S}^2)$ in Theorem \ref{thm:varrs}}
	\includegraphics{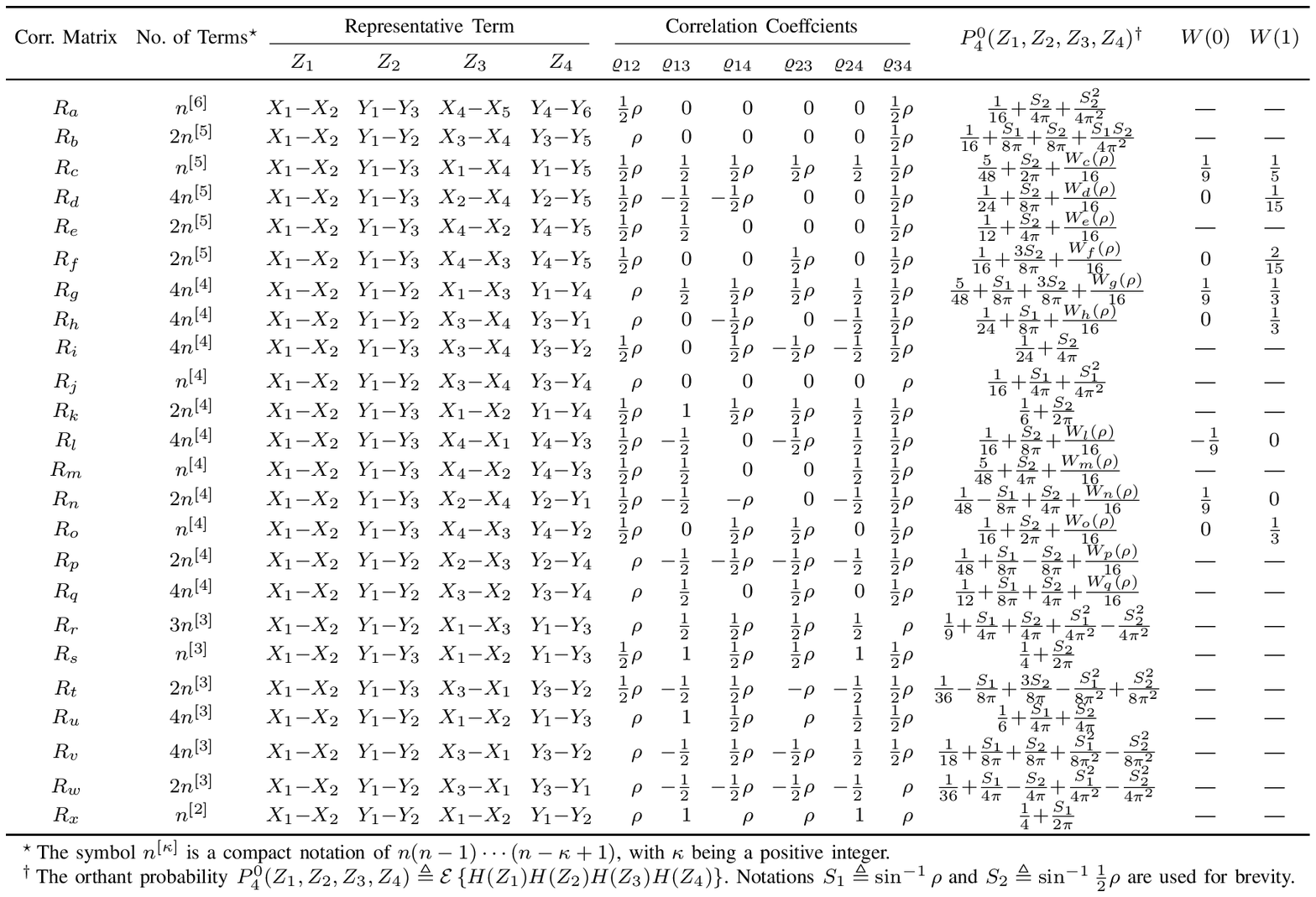}
\label{tab:orthant}
\end{table*}
\begin{proof}
Using the technique developed by Moran~\cite{moran1948} for finding $\mean(\RS)$, it follows that the ranks
can be expressed as
\begin{align}
	\label{eq:Pi}
	P_i &= \sum_{j=1}^n H(X_i-X_j) +1\\
	\label{eq:Qi}
	Q_i &= \sum_{k=1}^n H(Y_i-Y_k) +1
\end{align}
where $H(\blacktriangle)$ is defined in (\ref{eq:Hdef}).
Substituting (\ref{eq:Pi}) and (\ref{eq:Qi}) into (\ref{eq:rsdef}) yields
\begin{equation}
	\RS = \frac{\mathcal{S}-\frac{1}{4}(n-1)^2}{n}\frac{12}{n^2-1}
	\label{eq:sandrs}
\end{equation}
where
\begin{equation}
\label{eq:s}
\begin{split}
	\mathcal{S} &= \sum_{i=1}^n \sum_{j=1}^n \sum_{k=1}^n H(X_i-X_j) H(Y_i-Y_k)\\
	& = \underset{i\ne j=1}{\sum\limits^n\sum\limits^n}H(X_i-X_j)H(Y_i-Y_j) \\
	&\hspace{1cm} + \underset{i\ne j \ne k =1}{\sum\limits^n\sum\limits^n\sum\limits^n}H(X_i-X_j)H(Y_i-Y_k).
\end{split}
\end{equation}
Then
\begin{equation}
	\var(\RS)=\frac{144}{n^2(n^2-1)^2}\underbrace{\bigg[\mean(\mathcal{S}^2)-\mean^2(\mathcal{S})\bigg]}_{\var(\mathcal{S})}.
	\label{eq:varsandrs}
\end{equation}
Taking the expectation of both sides of (\ref{eq:s}) with the assistance of (\ref{eq:p2}) in Lemma~\ref{lem:lemknown}, it follows readily that
\begin{equation}
	\mean(\mathcal{S})=n^{[2]}\left(\frac{1}{4}+\frac{S_1}{2\pi}\right)+n^{[3]}\left(\frac{1}{4}+\frac{S_2}{2\pi}\right)
	\label{eq:means}
\end{equation}
where $n^{[\kappa]}\define n(n{-}1)\cdots(n{-}\kappa{+}1)$, with $\kappa$ being a positive integer. 
Now the variance of $\RS$ depends on the evaluation of $\mean(\mathcal{S}^2)$,  
which is a weighted summation of $24$ quadrivariate normal orthant probabilities $P_4^0(R_\xi)=\mean(Z_1 Z_2 Z_3 Z_4)$ 
corresponding to $R_\xi$ listed in Table~\ref{tab:orthant}~\cite{david1961}. 
Collecting the terms of $P_4^0(R_\xi)$ in Table~\ref{tab:orthant}, subtracting the square of the right side of (\ref{eq:means}) and substituting the resultant into (\ref{eq:varsandrs}) along with some simplifications,
we obtain the expression of (\ref{eq:thmvarrs}) with
\begin{align}
	\label{eq:thmO1second}
	\Omega_1(\rho) & {=} W_c {+} 4W_d {+} 2W_e {+} 2W_f\\ 
	\label{eq:thmO2second}
	\Omega_2(\rho)& {=} 4(W_g {+} W_h {+} W_l {+} W_q){+}2(W_n{+}W_p){+}W_m{+}W_o.
\end{align}
An application of the relationship (\ref{eq:p4}) to Appendix 2 of~\cite{david1961} yields
\begin{equation}
	\label{eq:Widentities}
	W_e {=} 2W_d,\, W_g {=} W_p,\, W_h {=} W_q,\text{ and }W_m {=} 2W_l+\frac{1}{3}.
\end{equation}
Substituting (\ref{eq:Widentities}) into (\ref{eq:thmO1second}) and (\ref{eq:thmO2second}) yields
(\ref{eq:thmO1}) and (\ref{eq:thmO2}), respectively. Hence the first theorem statement (\ref{eq:thmvarrs}) follows.
Ignoring the $o(n^{-1})$ terms in (\ref{eq:thmvarrs}) yields the second statement (\ref{eq:thmvarrsasymp}), thus completing the proof.
\end{proof}
\section{Derivations of $\Omega_1(\rho)$, $\Omega_2(\rho)$ and $\Omega_3(\rho)$ for $\rho=0,1$} \label{app:appO1O2O3}
\begin{proof}
	From (\ref{eq:thmO1}), (\ref{eq:thmO2}) and (\ref{eq:thmcovstatement}),  it suffices to evaluate
	$W_{\xi'}$, $\xi'\in\{c,d,f,g,h,l,n,o\}$ for $\rho=0,1$; and with (\ref{eq:WandP4}),
	it suffices to evaluate $P_4^0(R_{\xi'})$ for $\rho=0,1$. It follows readily from Appendix 2 of~\cite{david1961}
	that for $\rho=0$, $P_4^0(R_c)=P_4^0(R_g)=1/9$, $P_4^0(R_d)=P_4^0(R_h)=1/24$, $P_4^0(R_f)=P_4^0(R_o)=1/16$,
	$P_4^0(R_l)=1/18$, $P_4^0(R_n)=1/36$. Then, with the help of (\ref{eq:WandP4}), we have the values $W_{\xi'}(0)$
	as listed in the $W(0)$-column of Table~\ref{tab:orthant}. Using these $W_{\xi'}(0)$ values with 
	the relationships (\ref{eq:thmO1}), (\ref{eq:thmO2}) and (\ref{eq:thmcovstatement}) yields
        $\Omega_1(0)=1/9$, $\Omega_2(0)=5/9$ and $\Omega_3(0)=1/18$, respectively.

	When $\rho$ approaches unity, it is rather tricky to evaluate the values $W_{\xi'}(1)$. Substituting $\rho=1$ 
	directly into the integrals in (\ref{eq:lemWreduct}) or the integrals in Appendix 2 of~\cite{david1961} 
	will not lead to any tractable argument. We have to investigate case by case. From Table~\ref{tab:orthant}, 
	it is seen that the off-diagonal elements of $R_c$ are all $1/2$ When $\rho=1$. Then we have 
	$P_4^0(R_c)|_{\rho=1}=1/5$~\cite{steck1962}, and hence $W_c(1)=1/5$ by (\ref{eq:WandP4}). From~\cite{plackett1954} it follows
	that $P_4^0(R_f)|_{\rho=1}=2/15$ and $P_4^0(R_m)|_{\rho=1}=1/6$. Then we have, by (\ref{eq:WandP4}), $W_f(1)=2/15$
	and $W_m(1)=1/3$, the latter yielding $W_l(1)=0$ from the identity $W_m=W_l+1/3$ in (\ref{eq:Widentities}).
	Substituting $R_f|_{\rho=1}$ into (\ref{eq:lemWdef}) and exchanging $z_1$ and $z_2$ gives $W_f(1)=W_e(1)$,
	which implies that $W_d(1)=1/15$ by the identity $W_e=2W_d$ in (\ref{eq:Widentities}). Similarly we also have
	$W_o(1)=W_m(1)=1/3$ upon substitution of $R_m|_{\rho=1}$ into (\ref{eq:lemWdef}) and exchange of $z_3$ and $z_4$.
	It is easy to verify that $P_4^0(R_n)$ vanishes as $\rho\to1$, since in this case $Z_1=-Z_4$ and 
	$H(Z_1)H(Z_2)H(Z_3)H(Z_4)\equiv0$ by the definition of H($\blacktriangle$) in (\ref{eq:Hdef}). Then $W_n(1)=0$ 
	by applying the relationship (\ref{eq:WandP4}) once more. When $\rho$ approaches unity, it follows that $P_4^0(R_g)$
	and $P_4^0(R_h)$ degenerate to two trivariate normal orthant probabilities
	that have closed form expressions (\ref{eq:p3}). Specifically, it follows that $P_4(R_g)_{\rho\to1}=1/4$ and 
	$P_4(R_h)|_{\rho\to1}=1/8$, yielding $W_g(1)=1/3$ and $W_h(1)=1/3$, respectively. Having all the values
	of $W_{\xi'}(1)$, as listed in the $W(1)$-column of Table~\ref{tab:orthant}, and the three relationships 
	(\ref{eq:thmO1}), (\ref{eq:thmO2}) and (\ref{eq:thmcovstatement}), we obtain
	$\Omega_1(1)=1$, $\Omega_2(1)=16/3$ and $\Omega_3(1)=1/2$, respectively, and the evaluations complete. 
\end{proof}
\section{Proof of Theorem~\ref{thm:covrsrk}} \label{app:appcovrsrk}
\begin{table*}
	\centering
	\caption{Quantities for evaluation of $\mean(\mathcal{S}\mathcal{T})$ in Theorem \ref{thm:covrsrk}}
\includegraphics{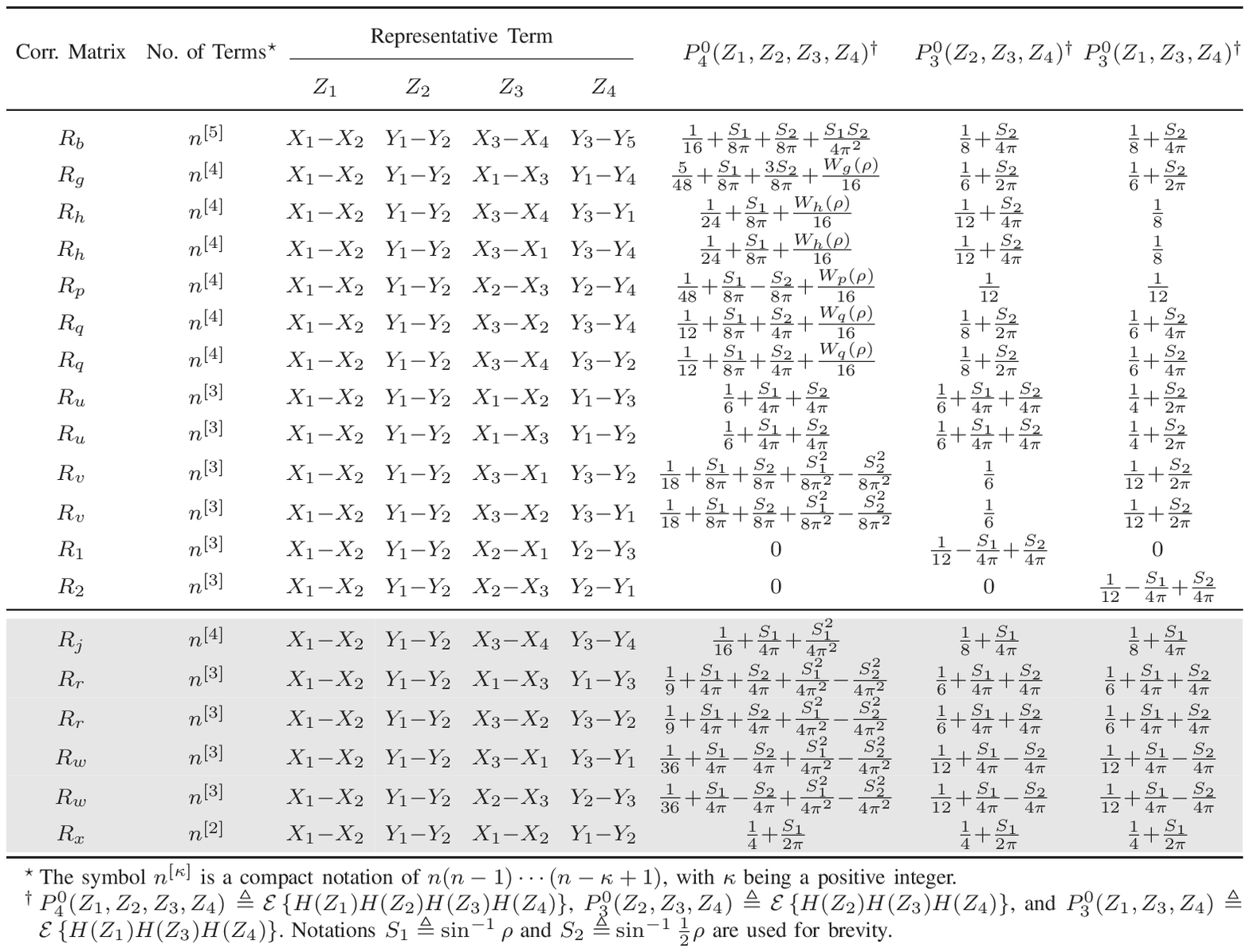}
\label{tab:cov}
\end{table*}
\begin{proof}
	Let $\mathcal{S}$ be the same as in (\ref{eq:s}) and $\mathcal{T}$ be the numerator of (\ref{eq:rkdef}).
	Define
	\begin{align}
		\label{eq:Idef}
		I &\define \underset{i\ne j}{\sum}\,H(X_i-X_j)H(Y_i-Y_j)\\
		\label{eq:Jdef}
		J &\define \hspace{-3pt}\underset{i\ne j\ne k}{\sum}\hspace{-4pt}H(X_i-X_j)H(Y_i-Y_k)\\
		\label{eq:Kdef}
		K &\define \underset{i\ne j}{\sum}\,H(X_i-X_j)\\
		\label{eq:Ldef}
		L &\define \underset{i\ne k}{\sum}\,H(Y_i-Y_k).
	\end{align}
	Then, we have, from (\ref{eq:rkdef}), (\ref{eq:sandrs}) and (\ref{eq:s}) along with the relationship $\sgn(\blacktriangle)=2H(\blacktriangle)-1$,  
	\begin{align}
		\label{eq:ssum}
		\mathcal{S} &= I + J\\
		\label{eq:tsum}
		\mathcal{T} &= 4I -2 K -2L + n^{[2]}
	\end{align}
	and hence
	\begin{equation}
		\begin{split}
			\hspace{-6pt}	\cov(\RS,\RK)  &= \frac{12}{n^2(n-1)(n^2-1)}\cov(\mathcal{S},\mathcal{T})\\
		& = \frac{12}{n^2(n-1)(n^2-1)}\bigg[\mean(\mathcal{S}\mathcal{T})-\mean(\mathcal{S})\mean(\mathcal{T})\bigg].
	\end{split}
		\label{eq:covst}
	\end{equation}
	From (\ref{eq:p1}) and (\ref{eq:p2}),  it follows that 
	\begin{equation*}
		\mean(I)=n^{[2]}\left(\frac{1}{4}+\frac{S_1}{2\pi}\right) \text{ and } \mean(K)=\mean(L)=\frac{n^{[2]}}{2}.
	\end{equation*}
	Substituting these expectation terms into (\ref{eq:tsum}) gives
	\begin{equation}
		\mean(\mathcal{T})=4n^{[2]}\left(\frac{1}{4}+\frac{S_1}{2\pi}\right)-n^{[2]}=\frac{2n^{[2]}}{\pi}S_1.
		\label{eq:meanT}
	\end{equation}
	Recall that we have obtained $\mean(\mathcal{S})$ in (\ref{eq:means}). Now the only difficulty lies in the 
	evaluation of $\mean(\mathcal{S}\mathcal{T})$ in (\ref{eq:covst}). 
	Multiplying (\ref{eq:ssum}) and (\ref{eq:tsum}), expanding and taking expectations term by term, we have 
	\begin{equation}
		\begin{split}
		\mean(\mathcal{S}\mathcal{T})&=4\mean(IJ)-2\mean(KJ)-2\mean(LJ)\\
		&\quad+4\mean(I^2)-2\mean(KI)-2\mean(LI)+n^{[2]}\mean(\mathcal{S}).
	\end{split}
		\label{eq:meanstexp}
	\end{equation}
	Now, resorting to Table~\ref{tab:cov}, we are ready to evaluate the first six terms in (\ref{eq:meanstexp}).  
	From (\ref{eq:Idef}) and (\ref{eq:Jdef}), it follows that $\mean(IJ)$ is a summation of $P_4^0$ terms
	of the form
	\begin{equation}
		\mean\{H(\underbrace{X_i{-}X_j}_{Z_1})H(\underbrace{Y_i{-}Y_j}_{Z_2})
		H(\underbrace{X_k{-}X_l}_{Z_3})H(\underbrace{Y_k{-}Y_m}_{Z_4})\}.
		\label{eq:IJform}
	\end{equation}
	Since, by definition (\ref{eq:Hdef}), $H(0)=0$, the term (\ref{eq:IJform}) vanishes 
	for $i=j$ or $k=l$ or $k=m$. Then there are $n^2(n-1)^2(n-2)$ nontrivial (\ref{eq:IJform})-like terms 
	left to be evaluated. It follows that the domain of the quintuple $(i,j,k,l,m)$ can be partitioned into
	thirteen disjoint and exhaustive subsets whose representative terms, $Z_1$, $Z_2$, $Z_3$, $Z_4$,  are listed in the upper panel of Table~\ref{tab:cov}. 
	Summing up the corresponding $P_4^0$-terms in Table~\ref{tab:cov} leads directly to $\mean(IJ)$.   
	In a similar manner we can obtain $\mean(KJ)$ and $\mean(LJ)$. With the assistance of the lower panel of Table~\ref{tab:cov}, we 
	also have the expressions of $\mean(I^2)$, $\mean(KI)$ and $\mean(LI)$. Substituting these results and 
	(\ref{eq:means}) into (\ref{eq:meanstexp}), subtracting the multiplication of (\ref{eq:means}) and (\ref{eq:meanT})
	and substituting the resultant back into (\ref{eq:covst}), we find that
	$\cov(\RS,\RK)$ is of the form (\ref{eq:thmcovrsrk}) with
	\[
		\Omega_3(\rho) = \frac{1}{4}W_g+\frac{1}{4}W_p+\frac{1}{2}W_h+\frac{1}{2}W_q 
	\]
which simplifies to (\ref{eq:thmcovstatement}) by applying the identities in (\ref{eq:Widentities}).
	The theorem then follows. 
\end{proof}
\section{Proof of Theorem~\ref{thm:thmrobust}} \label{app:appgmm}
\begin{proof}
For ease of the following discussion, we will use $\phi(x,y)$ and $\psi(x,y)$ to denote the pdfs 
of the two bivariate normal components in (\ref{eq:gmm}), respectively. 
From (\ref{eq:Pi}), (\ref{eq:Qi}) and (\ref{eq:tsum}), it follows that the numerator of (\ref{eq:rkdef})
	$\mathcal{T}$ can be simplified to
	\begin{equation}
		\mathcal{T}=4\underset{i\ne j=1}{\sum\limits^n\sum\limits^n}H(X_i-X_j)H(Y_i-Y_j)-n^{[2]}
		\label{eq:Tsumsimp}
	\end{equation}
which yields
\begin{equation}
	\mean(\mathcal{T})=4n^{[2]}\underbrace{\mean\left[H(X_1-X_2)H(Y_1-Y_2)\right]}_{E_1}-n^{[2]}
	\label{eq:meanTsum}
\end{equation}
by the i.i.d. assumption. To evaluate $E_1$ in (\ref{eq:meanTsum}), we need the joint distribution of 
$(X_1, Y_1,X_2,Y_2)$, denoted by $\varphi(x_1,y_1,x_2,y_2)$, which is readily obtained as
\begin{equation}
\begin{split}
	\varphi&=\left[(1-\epsilon)\phi_1+\epsilon\psi_1\right]\left[(1-\epsilon)\phi_2+\epsilon\psi_2\right]\\
	&=\underbrace{(1{-}\epsilon)^2}_{\alpha_1}\underbrace{\phi_1\phi_2}_{\varphi_1}{+}
	\underbrace{\epsilon(1{-}\epsilon)}_{\alpha_2}\underbrace{\phi_1\psi_2}_{\varphi_2}{+}
	\underbrace{\epsilon(1{-}\epsilon)}_{\alpha_3}\underbrace{\phi_2\psi_1}_{\varphi_3}{+}
	\underbrace{\epsilon^2}_{\alpha_4}\underbrace{\psi_1\psi_2}_{\varphi_4}
\end{split}
	\label{eq:pdf1212}
\end{equation}
where $\varphi$, $\phi_i$, $\psi_i$ are compact notations of $\varphi(x_1,y_1,x_2,y_2)$, $\phi(x_i,y_i)$ and $\psi(x_i,y_i)$, $i=1,2$, respectively.
Write 
\[
U\define \frac{X_1-X_2}{\sqrt{\var(X_1-X_2)}} \text{ and } V\define\frac{Y_1-Y_2}{\sqrt{\var(Y_1-Y2)}}.
\]
Then, with respect to $\varphi_1$, $\varphi_2$, $\varphi_3$, and $\varphi_4$ in (\ref{eq:pdf1212}), $(U,V)$ follows four
standard bivariate normal distributions with correlations 
\begin{align}
	\label{eq:rho1}
\varrho_1&=\rho\\
	\label{eq:rho2}
	\varrho_2&=\frac{\rho+\lambda_X\lambda_Y\rho'}{\sqrt{1+\lambda_X^2}\sqrt{1+\lambda_Y^2}}\to\rho'\text{ as }\lambda_X,\lambda_Y\to\infty\\
	\label{eq:rho3}
\varrho_3&=\frac{\rho+\lambda_X\lambda_Y\rho'}{\sqrt{1+\lambda_X^2}\sqrt{1+\lambda_Y^2}}\to\rho'\text{ as }\lambda_X,\lambda_Y\to\infty\\
	\label{eq:rho4}
\varrho_4&=\rho'
\end{align}
respectively. An application of the Sheppard's theorem (\ref{eq:p2}) to (\ref{eq:meanTsum}) along with (\ref{eq:rho1})--(\ref{eq:rho4}) yields
\begin{equation}
\begin{split}
	\mean(\mathcal{T})&=4n^{[2]}\sum_{i=1}^4 \alpha_i\left(\frac{1}{4}+\frac{1}{2\pi}\sin^{-1}\varrho_i\right)-n^{[2]}\\
	&=\frac{2n^{[2]}}{\pi}\left[\alpha_1\sin^{-1}\rho{+}2\alpha_2\sin^{-1}\varrho_2{+}\alpha_4\sin^{-1}\rho'\right].
\end{split}
	\label{eq:meanTsum1}
\end{equation}
Now it is not difficult to verify that the first statement  (\ref{eq:thmgmmmeanrk}) holds
by 1) dividing both sides of (\ref{eq:meanTsum1}) by $n^{[2]}$, 2) letting $\lambda_X\to\infty$ and $\lambda_Y\to\infty$, and 
3) ignoring the $O(\epsilon^2)$ terms.

To prove the second statement (\ref{eq:thmgmmmeanrs}), it suffices to evaluate $\mean(\mathcal{S})$ by the relationship (\ref{eq:sandrs}).
Taking expectations of both sides in (\ref{eq:s}) along with the i.i.d. assumptions gives
\begin{equation}
	\mean(\mathcal{S})=n^{[2]}E_1+n^{[3]}\underbrace{\mean\left[H(X_1-X_2)H(Y_1-Y_3)\right]}_{E_2}.
	\label{eq:meanSsum}
\end{equation}
Since we have known $E_1$ in the above development, now we only need to work out $E_2$ in (\ref{eq:meanSsum}). Let
$\varpi(x_1,y_1,x_2,y_2,x_3,y_3)$, abbreviated as $\varpi$, denote the pdf of the joint distribution of $(X_1,Y_1,X_2,Y_2,X_3,Y_3)$. Then, from (\ref{eq:gmm}) and the i.i.d. assumption,
\begin{equation}
\begin{split}
	\varpi&=\left[(1{-}\epsilon)\phi_1{+}\epsilon\psi_1\right]\left[(1{-}\epsilon)\phi_2{+}\epsilon\psi_2\right]\left[(1{-}\epsilon)\phi_3{+}\epsilon\psi_3\right]\\
	&=(1{-}\epsilon)^3\underbrace{\phi_1\phi_2\phi_3}_{\varpi_1}{+}\epsilon(1{-}\epsilon)^2(\underbrace{\phi_1\phi_2\psi_3}_{\varpi_2}{+}\underbrace{\phi_1\psi_2\phi_3}_{\varpi_3}{+}\underbrace{\psi_1\phi_2\phi_3}_{\varpi_4})\\
	&\phantom{=}+\epsilon^2(1{-}\epsilon)(\underbrace{\phi_1\psi_2\psi_3}_{\varpi_5}{+}\underbrace{\psi_1\phi_2\psi_3}_{\varpi_6}{+}\underbrace{\psi_1\psi_2\phi_3}_{\varpi_7})+\epsilon^3\underbrace{\psi_1\psi_2\psi_3}_{\varpi_8}.
\end{split}
	\label{eq:varpi1to8}
\end{equation}
where $\phi_i$ and $\psi_i$ are compact notations of $\phi(x_i,y_i)$ and $\psi(x_i,y_i)$, $i=1,2,3$, respectively.
Define 
\[
V'=\frac{Y_1-Y_3}{\sqrt{\var(Y_1-Y_3)}}.
\]
Then, with respect to $\varpi_1$ to $\varpi_8$ in (\ref{eq:varpi1to8}), $(U,V')$ follows $8$ standard bivariate normal distributions with correlations 
\begin{align}
	\label{eq:rho5}
	\varrho_5&=\frac{\rho}{2}\\
	\label{eq:rho6}
	\varrho_6&=\frac{1}{\sqrt{2}}\frac{\rho}{\sqrt{1+\lambda_Y^2}}\to 0\text{ as } \lambda_Y\to\infty\\
	\label{eq:rho7}
	\varrho_7&=\frac{1}{\sqrt{2}}\frac{\rho}{\sqrt{1+\lambda_X^2}}\to 0\text{ as } \lambda_X\to\infty\\
	\label{eq:rho8}
	\varrho_8&=\frac{\lambda_X\lambda_Y\rho'}{\sqrt{1+\lambda_X^2}\sqrt{1+\lambda_Y^2}}\to \rho'\text{ as } \lambda_X,\lambda_Y\to\infty \\
	\label{eq:rho9}
	\varrho_9&=\frac{\rho}{\sqrt{1+\lambda_X^2}\sqrt{1+\lambda_Y^2}}\to 0\text{ as } \lambda_X,\lambda_Y\to\infty\\
	\label{eq:rho10}
	\varrho_{10}&=\frac{1}{\sqrt{2}}\frac{\lambda_X\rho'}{\sqrt{1+\lambda_X^2}}\to \frac{\rho'}{\sqrt{2}}\text{ as } \lambda_X\to\infty\\
	\label{eq:rho11}
	\varrho_{11}&=\frac{1}{\sqrt{2}}\frac{\lambda_Y\rho'}{\sqrt{1+\lambda_Y^2}}\to \frac{\rho'}{\sqrt{2}}\text{ as } \lambda_Y\to\infty\\
	\label{eq:rho12}
	\varrho_{12}&=\frac{\rho'}{2}.
\end{align}
Using the Sheppard's theorem (\ref{eq:p2}) again together with (\ref{eq:varpi1to8})--(\ref{eq:rho12}), we can obtain the
expression of $E_2$ and hence $\mean(\mathcal{S})$ in terms of $n$, $\epsilon$ and $\varrho_1$ to $\varrho_{12}$. Substituting $\mean(\mathcal{S})$ into (\ref{eq:sandrs}),
letting $n$, $\lambda_X$, $\lambda_Y\to\infty$ and ignoring the $O(\epsilon^2)$ terms, we arrive at (\ref{eq:thmgmmmeanrs}), the second theorem statement. 
\end{proof}

\begin{biographynophoto}{Weichao Xu}
(M'06) received the B.Eng. and M.Eng. degrees in electrical
engineering from the University of Science and Technology of China, Hefei, China, in 1993
and 1996, respectively. He received the Ph.D. degree in biomedical
engineering from the University of Hong Kong, Hong Kong, in 2002.
Since 2003, he has been a Research Associate with the Department of Electrical and Electronic Engineering, the
University of Hong Kong. His research interests are in the areas of
mathematical statistics, machine learning, digital signal processing and applications.
\end{biographynophoto}
\begin{biographynophoto}{Yunhe Hou}
Yunhe Hou (M'06) received the B.E (1999), M.E(2002) and Ph.D(2005)
degrees from the Huazhong University of Science and Technology, China.
He worked as a postdoctoral research fellow at Tsinghua University
from 2005 to 2007. He was a visiting scholar at Iowa State University,
Ames, and a researcher of University College Dublin, Ireland from 2008
to 2009. He is currently with the University of Hong Kong, Hong Kong,
as a research assistant professor.
\end{biographynophoto}
\begin{biographynophoto}{Y. S. Hung}
(M'88--SM'02) received the B.Sc. (Eng.)
degree in electrical engineering and the B.Sc. degree
in mathematics from the University of Hong Kong,
Hong Kong, and the M.Phil. and Ph.D. degrees from
the University of Cambridge, Cambridge, U.K.

He was a Research Associate with the University
of Cambridge and a Lecturer with the University
of Surrey, Surrey, U.K. In 1989, he joined the
University of Hong Kong, where he is currently a
Professor.  His research interests include robust control systems
theory, robotics, computer vision, and biomedical engineering.

Prof. Hung was a recipient of the Best Teaching Award in 1991 from the Hong
Kong University Students¡¯ Union. He is a chartered engineer and a fellow of
IET and HKIE.
\end{biographynophoto}
\begin{biographynophoto}{Yuexian Zou}
Yuexian Zou received her M.Sc. (1991) and Ph.D. (2000) from the University of Electronic Science and
Technology of China and the University of Hong Kong, respectively.

Since 2006, she serves as an Associate Professor in Peking University, and is the director
of the Advanced Digital Signal Processing Lab of Peking University Shenzhen Graduate School. 
Dr. Zou has more than 15 years research experience in digital signal processing, 
adaptive signal processing and their applications. She is currently the senior IEEE member
and has published more than 50 journal and conference papers.

She was the organization co-chair of NEMS09 and served as the founding chair of the WIE Singapore in 2005.
She has carried out more than 10 national funded projects since 2000.
She is currently working on a project funded by Shenzhen Bureau of Science Technology and Information in fast extraction
of Somatosensory Evoked Potential. Dr. Zou¡¯s research interests
include adaptive signal processing, biomedical signal processing and active noise control. 
\end{biographynophoto}
\end{document}